\definecolor{e-mail}{rgb}{0,.40,.80}
\definecolor{reference}{rgb}{.40,.60,.2}
\definecolor{citation}{rgb}{0,.25,.5}
\def\Td{{\rm Td}}
\def\wbar{{\overline{w}}}
\begin{document}

\title{Twisted heterotic/type I duality}

\author{Kevin Costello}
\address{Perimeter Institute of Theoretical Physics \\ 31 Caroline St N, Waterloo, ON N2L 2Y5, Canada}
\email{kcostello@perimeterinstitute.ca}

\author{Brian R. Williams}
\address{School of Mathematics\\
University of Edinburgh, Edinburgh, UK}
\email{brian.williams@ed.ac.uk}

\date{}

\begin{abstract}
We formulate a twisted version of the conjectured duality between heterotic and type I string theories.  Our formulation relates the chiral part of the heterotic string with a type I topological B-model on a Calabi-Yau five-fold.   We provide a non-trivial check of this duality by showing that  certain infinite-dimensional Lie algebras of global gauge transformations built from each theory are isomorphic. Matching the structure constants of the Lie algebras involves a detailed analysis of one-loop corrections on the type I side.
\end{abstract}

\maketitle

The main object of study of this  paper is the Lie algebra of ghost number $+1$ states in the chiral sector of the heterotic string.
From the string field theory perspective, this Lie algebra is the Lie algebra of gauge symmetries which preserve the vacuum solution to the equations of motion.
We can call this the Lie algebra of global gauge transformations.

Our main calculation shows that this Lie algebra is isomorphic to the Lie algebra of global gauge transformations of the {\em type I topological string} studied in \cite{CLtypeI}.
This is a topological string theory whose target space is a Calabi-Yau $5$-fold and is related to the topological $B$-model in a similar way to how the type I string is related to the physical string.

Our result provides strong evidence of a duality between the chiral part of the heterotic string and the type I topological string, parallel to the standard heterotic-type I duality. 

As such, we provide evidence between two very different classes of world-sheet theories. 
On one side, the type I topological string is basically the topological B-model equipped with an action of the group $O(2)$ \cite{CLtypeI}.   
The other side is a chiral limit of a physical string theory; it is built from a chiral (or holomorphic), {\em not a topological}, $\sigma$-model.  In particular, the type I $B$-model does not have any world-sheet instantons, whereas the chiral heterotic string does. 

This duality also fits with the conjecture made in \cite{CLtypeI} 
relating physical and topological type I strings on a Calabi-Yau $5$-fold.
A better justification for this conjecture would be obtained if we could show that the anti-chiral sector of the heterotic string is trivial in cohomology when we twist by a space -time supersymmetr.
We do not investigate this point here, as space-time supersymmetries live in the Ramond sector making the cohomology computation more difficult, but we hope to return to it in future work.

\subsection*{An outline of the paper}
First, we introduce the chiral sector of the heterotic string on $\CC^{5}$.
We obtain this by first taking an infinite volume limit of the physical heterotic string.  In this limit, the chiral and anti-chiral sectors decouple. 
We then compute the BRST cohomology of local operators in the chiral sector, and finally the Lie bracket on the ghost number $1$ states. 

We then compare with the type I topological string of \cite{CLtypeI}. We find that the Lie algebras are isomorphic. To check this isomorphism, we have to make explicit certain terms in the type I topological string Lagrangian that are forced on us by anomaly cancellation (in \cite{CLtypeI} only the existence of corrections which cancel anomalies was proved, but not all terms were explicitly written down).

\subsection*{Acknowledgements}

We wish to thank Chris Elliott, Owen Gwilliam, Si Li, Natalie Paquette, and Ingmar Saberi for numerous useful conversations on aspects related to this work. 

K.C. is supported by the NSERC Discovery Grant program and by the Perimeter Institute for Theoretical Physics. Research at Perimeter Institute is supported by the Government of Canada through Industry Canada and by the Province of Ontario through the Ministry of Research and Innovation.
B.R.W thanks the University of Edinburgh for their support.

\section{The chiral heterotic string}
Let $X$ be a Calabi-Yau $5$-fold.
Mostly, in the first part of this paper we are concerned with the noncompact case $X = \C^5$, but we will discuss curved backgrounds later on in Section \ref{sec:KS}.
For us, the heterotic string with target $X$ is obtained by coupling the $\so(32)$ Kac--Moody chiral algebra at level $1$ to the $\sigma$-model with target the Calabi--Yau five-fold $X$ with $\cN = (0,2)$ supersymmetry.
In addition, we introduce $\b-\c$ ghosts in the chiral sector, and similar ghosts and super-ghosts in the anti-chiral sector which gauge the $\cN = (0,1)$ super Virasoro algebra, as is standard in string theory \cite{PolchinskiVol2}.

The $\sigma$-model with target $X$ can be written in a first order formulation. When we do this, it consists of a chiral $\beta-\gamma$ system with target $X$, a complex conjugate anti-chiral $\bar{\beta}-\bar{\gamma}$ system, and anti-chiral fermions. The chiral and anti-chiral sectors are coupled by a term which tends to zero in the limit where we send the K\"ahler form on $X$ to infinity while keeping the complex structure fixed.   In this limit, the chiral and anti-chiral parts of the model decouple (except for a topological term which plays no role in our analysis).
For a careful treatment we refer to Appendix A of \cite{GGW}

Similarly, in the limit when we send the K\"ahler form on $X$ to infinity, the entire worldsheet theory of the heterotic string decouples as a product of chiral and anti-chiral sectors.  The point is that the ghost systems enforcing diffeomorphism invariance, as well as the fermions in the anti-chiral sector, were already decoupled even when the K\"ahler form is finite.  See \cite{GWstring} for a detailed discussion of this in the case of the bosonic string.

Thus, the chiral part of the heterotic string consists of the following three pieces:
\begin{enumerate}
	\item The $\beta-\gamma$ system with target the Calabi-Yau manifold $X$.
	\item The level one $\so(32)$ VOA $\mathfrak{so}(32)_1$, which is equivalently the VOA of $32$ free fermions.
	\item The $\b-\c$ ghost system which gauges worldsheet conformal transformations.
\end{enumerate}

\begin{rmk}
If we were to use the $E_8 \times E_8$ heterotic string, one would replace item (2) with the $\mf{e}_8 \oplus \mf{e}_8$ current algebra, again at level $1$.
We will use free-fermions as a model for the fermionic matter system. 
One can repeat our analysis by replacing it with the appropriate Kac--Moody current algebra and the same results will hold.
\end{rmk}

The anti-chiral sector, which will not play a role for us, is a bit more complicated. 
The bosonic fields involve an anti-holomorphic $\Bar{\beta}-\Bar{\gamma}$ system with target $X$.
In addition to fermionic matter fields, the system is coupled to the anti-chiral $\cN=(0,1)$ superconformal ghost system consisting of the $\Bar{b}-\Bar{c}$ ghost system and its superpartner. 

\section{The cohomology of local operators}

We turn our attention to the local operators of the chiral part of the heterotic string on flat space.
The algebra of operators is generated by five chiral fields $\gamma^i$ of spin zero, and five chiral fields $\beta_i$ of spin $1$, $i= 1,\ldots 5$, which satisfy the OPE
\begin{equation}
	\beta_i (0) \gamma^j (z) \;  \simeq \; \frac{1}{z} \delta^j_i.
\end{equation}
We also have $32$ real fermions $\psi_a$, $a = 1,\ldots 32$, which satisfy the OPE
\begin{equation}
	\psi_a (0) \psi_b (z) \; \simeq \; \frac{1}{z} \delta_{a b}  .
\end{equation}
Finally, we have the $\b-\c$ ghosts, where $c$ is of spin $-1$ and ghost number $+1$, $\b$ is of spin $+2$ and ghost number $-1$, and they obey the OPE
\begin{equation}
	\b (0)  \c (z) \; \simeq  \; \frac{1}{z}.
\end{equation}
The BRST current is
\begin{equation}
	J_{BRST} = \sum_i \c \beta_i \partial_z \gamma^i + \sum_a \c \psi_a \partial_z\psi_a + \frac{1}{2}\c \b \partial_z \c  .
\end{equation}
The value of the BRST operator $Q$ applied to a local operator $\mc{O}$ is
\begin{equation}
	Q \mc{O}(0) = \oint_z J_{BRST}(z) \mc{O}(0)
\end{equation}
where we fix a small contour around $0 \in \CC$.

\subsection*{Level matching}
As usual in string theory, this complex is too large and certain states must be removed by imposing a level matching constraint. When there are are chiral and anti-chiral sectors, it is usually implemented by asking that operators are in the kernel of the zero mode of the operator $\b - \br{\b}$.
In our purely chiral setting, we do the same thing,  and only consider states in the kernel of $\b_0$.
Equivalently, we simply remove all states which contain $\partial_z \c$.  Since, in the BRST complex, $\partial_z \c$ enforces invariance under the rotation $L_0 = z \partial_z$, once we remove $\partial_z \c$ we also only consider the subcomplex of states invariant under $L_0$.
We denote by
\begin{equation}
\left(\cS^\bu = \oplus_i \cS^i [-i] \; , \; Q \right)
\end{equation}
the resulting subcomplex of the full BRST complex of the local operators in the chiral sector of the heterotic string.
Here $\cS^i$ denotes those operators of ghost number $i$.

Note that this constraint is simply the standard level matching constraint once we assume that in the right moving sector we are considering the vacuum state.  Only by applying this constraint do we find a sub-sector of the full heterotic string. 

This maneuvre also has a solid mathematical justification. The $\c$-ghost enforce diffeomorphism invariance, here with respect to the holomorphic transformations of an infinitesimal disc $D$.  The Lie algebra of such holomorphic transformations is the Lie algebra of vector fields on an infinitesimal disc, which as a vector space is $\C[[z]] \partial_z$.
Alternating words in the $\c$-ghost and its derivatives are the Lie algebra cochains of $\C[[z]] \partial_z$, and the terms in the BRST operator which add a $\c$-ghost are the differential for the Lie algebra cochain complex of $\C[[z]] \partial_z$ with coefficients in a module built from the remaining operators.

In mathematical terminology, the cochain complex computing Lie algebra cohomology of $\C[[z]] \partial_z$ with coefficients in any module is the \emph{derived} invariants of $\C[[z]] \partial_z$.  
It is important to remember, however, that the holomorphic transformations we are considering form not just a Lie algebra, but a group.\footnote{Strictly speaking, only the transformations which preserve the basepoint form a group, the vector field $\partial_z$ can not be exponentiated.  On a formal disc, all other vector fields can be exponentiated. }  The difference is only seen for the subalgebra generated by $z \partial_z$, which exponentiates to the group $\C^\times$, which is a reductive group.  For reductive groups like this, derived invariants and strict invariants coincide.  Therefore, instead of taking the derived invariants of $\C[[z]]\partial_z$, we should take the \emph{actual} invariants with respect to $\C^\times$, and the derived invariants with respect to the remaining transformations.  This is achieved by the procedure mentioned above, which is known as {\em relative} Lie algebra cohomology.

\subsection*{The chiral heterotic states}

Before computing the BRST cohomology, we explicitly describe the cochain complex $\cS^\bu$ of the level-matched states for the heterotic string.
As we discussed above, the computation is simplified by noting that we are only interested in states that are invariant under $L_0$ (spin zero) and that there is only one operator of negative spin, namely $\c$.

The relevant piece of the BRST complex is
\beqn\label{BRSTseq}
\cdots \to \cS^{0} \xto{Q} \cS^{1} \xto{Q} \cS^{2} \to \cdots
\eeqn

\begin{table}
\begin{tabular}{c|c|c|cccccc}
& {\rm ghost} & $\Pi$ & {\rm spin} \\
\hline
$\partial^\ell \sfc$ & $+1$ & even & $\ell - 1$ \\
$\partial^\ell \sfb $ & $-1$ & even & $\ell + 2$ \\
\hline
$\partial^\ell \gamma^i$ & $0$ & even & $\ell$ \\
$\partial^\ell \beta_j$ & $0$ & even & $\ell+1$ \\
\hline
$\partial^\ell \psi_\alpha$ & $0$ & odd & $\ell+\frac12$ \\
\end{tabular}
\end{table}

First off, it is easy to see that there are no spin zero operators of ghost number less than zero.
Thus, we start at ghost number zero.

\subsubsection*{Ghost number zero}

The only spin zero operators of ghost number zero consist of polynomial functions of the operator $\gamma^i$, $i=1,\ldots, 5$.
Thus, in ghost number zero we have
\begin{equation}
\cS^0 = \CC[\gamma^i] = \Fun(\CC^5) .
\end{equation}

\subsubsection*{Ghost number one}

There are three classes of local operators that are of spin zero and ghost number one:
\begin{itemize}
\item
There is the spin one, ghost number zero linear operator $\partial \gamma^i$ for $i=1,\ldots, 5$.
Then $\partial \gamma^i \sfc$ is a spin zero, ghost number one operator.
\item
There is the spin one, ghost number zero linear operator $\beta_j$, $j=1,\ldots, 5$.
Then $\beta_j \sfc$ is a spin zero, ghost number one operator.
\item
There are the spin one, ghost number zero quadratic operators $\psi_a \psi_b$, $a,b = 1,\ldots, 32$.
Then $\psi_a \psi_b \sfc$ is a spin zero ghost number zero operator.
\end{itemize}

If $\cO$ is any of the spin zero, ghost number one operators in the list above and $f \in \Fun(\CC^5)$ is a polynomial function, then
\begin{equation}
f(\gamma) \, \cO
\end{equation}
is another spin zero, ghost number one operator.
This observation leads us to the following characterization of the spin zero, ghost number one operators.

\begin{lem}\label{lem:gh1}
There is a linear isomorphism
\begin{equation}
\cS^1 \; \cong \; \Omega^1(\CC^5) \oplus {\rm Vect}(\CC^5) \oplus \Fun(\CC^5) \otimes \so(32) .
\end{equation}
\end{lem}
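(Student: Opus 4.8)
The plan is to prove the isomorphism by directly classifying the spin-zero, ghost-number-one operators, reading off the admissible field content from the table of generators and their ghost numbers and spins. The first step is a bookkeeping argument showing that every such operator has the form $\sfc\,\cO$ with $\cO$ of ghost number zero and spin one. Write a general state as a normal-ordered monomial in the generators and their derivatives, containing $n_\sfc$ factors of type $\partial^\ell\sfc$ and $n_\sfb$ of type $\partial^\ell\sfb$; ghost number one forces $n_\sfc-n_\sfb=1$. Since $\partial^\ell\sfc$ has spin $\ell-1\ge-1$, $\partial^\ell\sfb$ has spin $\ell+2\ge2$, and every matter generator has spin $\ge0$, the spin-zero condition gives $-n_\sfc+2n_\sfb\le0$, so with $n_\sfc=n_\sfb+1$ we get $n_\sfb\le1$. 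The case $n_\sfb=1$, $n_\sfc=2$ is excluded because, as $\sfc^2=0$, the two $\sfc$-factors must include at least one derivative, contributing strictly positive ghost spin that the matter factors (spin $\ge0$) cannot absorb. Hence $n_\sfb=0$ and $n_\sfc=1$; the lone $\partial^\ell\sfc$ then requires $\ell\le1$, and since $\partial_z\sfc$ has been removed by the level-matching constraint, $\ell=0$. Therefore $\cS^1=\sfc\cdot M$, where $M$ is the space of ghost-number-zero, spin-one operators built from $\gamma$, $\beta$, $\psi$ and their derivatives.

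The second step enumerates $M$. Among the matter generators, only $\gamma^i$ has spin $0$, only $\partial\gamma^i$ and $\beta_j$ have spin $1$, only $\psi_\alpha$ has spin $\frac12$, and every remaining generator — $\partial^\ell\gamma^i$ with $\ell\ge2$, $\partial^\ell\beta_j$ with $\ell\ge1$, $\partial^\ell\psi_\alpha$ with $\ell\ge1$ — has spin $\ge\frac32$. A normal-ordered monomial of total spin one, after factoring out an arbitrary polynomial $f(\gamma)\in\Fun(\CC^5)$ (the only spin-zero matter building block), must thus be one of $f(\gamma)\,\partial\gamma^i$, $f(\gamma)\,\beta_j$, or $f(\gamma)\,\psi_a\psi_b$ with two undifferentiated fermions: a single $\psi$ gives half-integer spin, while $\psi_a\partial\psi_b$ and products of four or more $\psi$'s overshoot. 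These three families carry different values of the pair (number of $\beta$-factors, fermion number), so they are linearly independent and $M$ is their direct sum. Finally, $f_i(\gamma)\,\partial\gamma^i\mapsto f_i\,d\gamma^i$ identifies the first family with $\Omega^1(\CC^5)$, $g^j(\gamma)\,\beta_j\mapsto g^j\,\partial_{\gamma^j}$ identifies the second with ${\rm Vect}(\CC^5)$, and $h^{ab}(\gamma)\,\psi_a\psi_b$ with $a<b$ identifies the third with $\Fun(\CC^5)\otimes\Lambda^2\CC^{32}=\Fun(\CC^5)\otimes\so(32)$ via the standard identification of the fermion bilinears with a basis of $\so(32)$. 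Tensoring with the line spanned by $\sfc$ gives the claimed isomorphism.

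The only substantive point is the exhaustiveness of these two enumerations, and I expect the first step — in particular the clean exclusion of the $\sfb$-ghost and of $\sfc$-derivatives, which combines the ghost-number and spin constraints with $\sfc^2=0$ — to be the main thing to get exactly right, together with recording the normal ordering that makes $f(\gamma)\,\beta_j$ well defined despite the nontrivial OPE between $\gamma$ and $\beta$. None of this is deep: the lemma amounts to carefully organizing the combinatorics of the free-field generators under the ghost and spin gradings.
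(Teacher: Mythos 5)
Your proof is correct and follows essentially the same route as the paper: the paper enumerates the same three families of spin-zero, ghost-number-one operators (in the bulleted list preceding the lemma) and its proof simply writes down the three identification maps, leaving the exhaustiveness argument informal. Your explicit bookkeeping with the ghost-number and spin gradings (ruling out $\sfb$-factors and $\sfc$-derivatives) just makes precise what the paper treats as "easy to see," so there is no substantive difference in approach.
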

On the right hand side, we mean \emph{algebraic} one-forms, vector fields, and functions, i.e. holomorphic tensors whose coefficients are polynomials.   Given a triple $(\omega, X, \alpha)$ in the right-hand side, we denote the corresponding operators by $\cO_\omega, \cO_X$ and $\cO_\alpha$.
\begin{proof}
Given a holomorphic one-form $\omega = f \; \d w^i \in \Omega^1(\CC^5)$ the corresponding operator is
\begin{equation}
\cO_\omega = \frac12 f (\gamma) \, \partial \gamma^i \sfc .
\end{equation}
The factor of $\frac12$ is a convenient normalization. 

Similarly, a holomorphic vector field $X = X^j(w) \partial_j$  determines the operator
\begin{equation}
\cO_X = X^j(\gamma) \, \beta_j \sfc .
\end{equation}

Finally, a matrix valued function $\alpha = (\alpha_{ab}) \in \Fun(\CC^5) \otimes \so(32)$, where $\alpha_{ab} \in \Fun(\CC^5)$, determines the matrix of operators $\cO_{\alpha} = ((\cO_{\alpha})_{ab})$ where
\begin{equation}
(\cO_{\alpha})_{ab} = \frac12 \alpha_{ab} (\gamma) \psi_a \psi_b \sfc .
\end{equation}
\end{proof}

\subsubsection*{Ghost number two}

The spin zero operators of ghost number two all have the form
\begin{equation}
f(\gamma) \, \sf c \partial^2 \sf c
\end{equation}
where $f \in \Fun(\CC^5)$ is an arbitrary polynomial function.
Thus
\begin{equation}
\cS^{2} \cong \Fun(\CC^5).
\end{equation}

It is easy to check that there are no spin zero operators of ghost number greater than two.

\subsection*{BRST cohomology}

Now let us turn to the computation of the BRST cohomology of the chiral heterotic string.
We will find that the cohomology in concentrated in ghost degrees zero and one.

We have seen that the complex $\cS^\bu$ consisting of operators which are invariant under $L_0$ and do not contain $\partial \sfc$ is concentrated in ghost degrees zero, one, and two.
Of these operators we have enumerated, not all of them are BRST closed.
Moreover, some are BRST exact leading to relations among the operators.

\begin{thm}
The  BRST complex of the chiral sector of the heterotic string is quasi-isomorphic to the cochain complex
\beqn\label{eqn:brstcomplex}
\begin{tikzcd}
\ul{0} & \ul{1} & \ul{2} \\
\Fun(\CC^5) \ar[r, "\partial"] & \Omega^1(\CC^5) & \\
& {\rm Vect}(\CC^5) \ar[r, "{\rm div}"] & \Fun(\CC^5) \\
& \Fun(\CC^5) \otimes \so(32) & .
\end{tikzcd}
\eeqn
Here, $\d$ stands for the holomorphic de Rham operator on $\CC^5$ and ${\rm div}$ is the operator which sends a vector field to its divergence.
\end{thm}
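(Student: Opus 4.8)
The strategy is to compute the BRST operator $Q = \oint J_{BRST}$ directly on the generators of $\cS^\bu$ enumerated above, and to check that under the isomorphism of Lemma~\ref{lem:gh1}, together with the tautological identifications $\cS^0 = \Fun(\CC^5)$ and $\cS^2 \cong \Fun(\CC^5)$ via $f \mapsto f(\gamma)\,\sfc\partial^2\sfc$, the operator $Q$ becomes --- up to nonzero overall constants on each summand, which may be absorbed into the identifications --- exactly the differential in~\eqref{eqn:brstcomplex}; in fact this gives an isomorphism of cochain complexes, hence a fortiori the asserted quasi-isomorphism, since $\cS^\bu$ is already concentrated in degrees $0,1,2$. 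I begin with the differential out of $\cS^0$: for $f \in \Fun(\CC^5)$, only the term $\sum_i \sfc\,\beta_i\partial_z\gamma^i$ of $J_{BRST}$ has a singular OPE with $f(\gamma)(0)$; the $\beta_i$ contracts against the $\gamma$'s in $f$ to produce $z^{-1}\partial_{\gamma^i} f$, and the residue against $\sfc\,\partial_z\gamma^i(z)$ leaves $\sum_i \partial_{\gamma^i} f \cdot \partial\gamma^i\sfc$, which is $\cO_{\d f}$ up to the normalization factor $\tfrac12$. Since no term of $J_{BRST}$ can produce a $\beta$ or a $\psi\psi$ out of a function of $\gamma$ alone, the image lies entirely in the $\Omega^1(\CC^5)$ summand and $Q|_{\cS^0}$ is the holomorphic de Rham operator.

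Next I compute $Q$ on $\cS^1$, summand by summand. For $\cO_X = X^j(\gamma)\beta_j\sfc$ with $X \in {\rm Vect}(\CC^5)$, the term $\sum_i \sfc\,\beta_i\partial_z\gamma^i$ admits a double contraction: $\beta_i$ against the $\gamma$'s of $X^j$, and $\partial_z\gamma^i$ against the $\beta_j$ of $\cO_X$. The index sum collapses to $\partial_{\gamma^j}X^j = {\rm div}(X)$, the combined $z$-dependence is a triple pole, and the residue extracts $\tfrac12\sfc\partial^2\sfc$, so this contributes a nonzero multiple of ${\rm div}(X)\,\sfc\partial^2\sfc \in \cS^2$. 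For $\cO_\omega$ and $\cO_\alpha$ only single contractions against the matter terms of $J_{BRST}$ are available, and after taking the residue these leave two undifferentiated $\sfc$ insertions colliding at a point, which vanish because $\sfc$ is fermionic. The one remaining candidate is the double contraction of $\cO_\alpha$ against $\sum_a \sfc\,\psi_a\partial_z\psi_a$, but its index sum is proportional to ${\rm tr}(\alpha) = 0$ since $\alpha$ is $\so(32)$-valued. Hence $Q$ annihilates the $\Omega^1(\CC^5)$ and $\Fun(\CC^5)\otimes\so(32)$ summands and restricts to ${\rm div}$ on ${\rm Vect}(\CC^5)$. Finally $Q = 0$ on $\cS^2$ for degree reasons, and nilpotency $Q^2 = 0$ is part of the BRST formalism --- and visibly consistent here, as the only composite $\cS^0 \to \cS^1 \to \cS^2$ that could be nonzero factors through $\Omega^1(\CC^5)$, on which $Q$ vanishes.

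The one genuinely delicate point --- and the step I expect to be the main obstacle --- is the contribution of the ghost term $\tfrac12\sfc\,\sfb\,\partial_z\sfc$. Contracting its $\sfb$ against the $\sfc$ in any ghost-number-one operator produces a state containing $\partial_z\sfc$, which is precisely the class of states discarded by the level-matching constraint, so these contributions must be dropped; doing this purely by hand would be unsatisfying. The correct bookkeeping is the one indicated before the statement: $\cS^\bu$ is the \emph{relative} Chevalley--Eilenberg complex of $\CC[[z]]\partial_z$ with respect to the scaling subalgebra $\CC z\partial_z$, the $\sfc$-ghost and its derivatives being the exterior generators dual to $\CC[[z]]\partial_z$; the direction removed is exactly the one detected by $\partial_z\sfc$, and the relative differential is concretely the part of $Q$ that does not create such states. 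Once this is in place the computations above are complete, and assembling them identifies $(\cS^\bu,Q)$ with the complex~\eqref{eqn:brstcomplex}.
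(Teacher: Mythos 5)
Your proposal is correct and follows essentially the same route as the paper: a direct Wick-contraction computation of $Q$ on each summand of $\cS^0$, $\cS^1$, $\cS^2$, identifying the single-contraction image of $\Fun(\CC^5)$ with $\partial$, the double contraction on $\cO_X$ with ${\rm div}$, and disposing of the remaining terms via $\sfc\sfc=0$ and the level-matching removal of $\partial\sfc$. The only (harmless) difference is that you package the vanishing on $\cO_\alpha$ through ${\rm tr}(\alpha)=0$ where the paper notes the double contraction already vanishes for each fixed pair $a\neq b$.
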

\begin{proof}
Any spin zero, ghost degree zero operator is of the form $f(\gamma)$ where $f \in \Fun(\CC^5)$.
On such operators the BRST operator reads
\begin{equation}
Q (f(\gamma)) = \sum_{i=1}^5 \partial_i f (\gamma) \, \partial \gamma^i \, \sfc \in \cS^1 .
\end{equation}
Notice that this is identical to the ghost number one operator $\cO_{\d f}$ corresponding to the one-form $\d f \in \Omega^1(\CC^5)$.

For any one-form $\omega$, the ghost number one operator $\cO_\omega$ is closed $Q (\cO_\omega) = 0$ for the full BRST differential.

	Now let us consider a vector field $X$, with the operator $\cO_X$ of the form $\cO_X = \beta_i f^i (\gamma) \c$.  The term in the BRST current which will act on $\cO_X$ is $\c \beta_k (\partial \gamma^k)$.  The OPE between $J_{BRST}(z)$ and  $\cO_X$ can involve either one or two Wick contractions of the $\beta$ and $\gamma$ fields. The coefficient of $z^{-1}$ in the single Wick contraction part vanishes, because  all terms involve either $\c \c$ or $\c \partial \c$ and the latter vanishes by level-matching.  The term involve two Wick contractions takes the form
	\begin{equation*} 
		\frac{1}{z^3} \frac{\partial}{\partial \gamma^i} f^i (\gamma) \c(0) \c(z).  
	\end{equation*}
	Expanding $\c(z)$ as $\c(z) = \c(0) + z \partial \c(0) + \frac12 z^2 \partial^2 \c(0) + \cdots $ and retaining the coefficient of $z^{-1}$ in the above expression we find that 
	\begin{equation*} 
		Q( \beta_i f^i(\gamma) \c) =  \frac{\partial}{\partial \gamma} f^i (\gamma) \c \partial^2 \c.  
	\end{equation*}
	In terms of the vector field $X$, the right hand side is the divergence, so that
\begin{equation}
Q (\cO_X) = ({\rm div}(X)) (\gamma) \, \sfc \, \partial^2 \sfc \in \cS^2 .
\end{equation}

Finally, for any $\alpha \in \cO(\CC^5) \otimes \so(32)$ the operator $\cO_{\alpha}$ is BRST closed.
In fact, for any $f \in \Fun(\CC^5)$ and $a,b = 1,\ldots, 32$ we have
\begin{equation}
Q \left(f (\gamma) \psi^a \psi^b \sfc \right) = 0 \in \cS^2 .
\end{equation}

\end{proof}

Thus, we see that the cohomology $H^\bu(\cS)$ is concentrated in degrees zero and one.
In degree zero, only the unit survives $H^0(\cS) \cong \CC$.
In degree one, we obtain
\begin{equation}
H^1 (\cS) \cong \Omega^1 (\CC^5) / {\rm Im} \; \d \oplus {\rm Vect}^{\rm div}(\CC^5) \oplus {\rm Fun}(\CC^5) \otimes \fs \fo (32)
\end{equation}
where ${\rm Vect}^{\rm div}(\CC^5)$ stands for the space of divergence-free vector fields on $\CC^5$.

\section{The algebra of local operators}
\label{heterotic_ope}

The algebra of local operators of a (chiral) string theory does not behave quite like those of a CFT \cite{Getzler1, Getzler2, LZ1, LZ2}. Once we have introduced $\b-\c$ ghosts and taken cohomology, the operator $\partial_z = L_{-1}$ of translation acts cohomologically trivially.  This is simply because worldsheet reparametrization is a gauge symmetry and we are considering gauge invariant quantities.

This immediately implies that for any two BRST closed operators $\mc{O}, \mc{O}'$, the operator product $\mc{O}(0) \mc{O}'(z)$ satisfies
\begin{equation*} 
	\partial_z(\mc{O}(0) \cdot \mc{O}(z) ) = Q ( \dots )
  \end{equation*}
where the $\dots$ stands for some product of local operators.
In particular, at the level of cohomology the OPE has no singularities.
The non-singular part of the OPE gives the cohomology of the space of local operators a commutative product.

For the chiral heterotic string, in BRST cohomology there are only operators of ghost numbers zero and one.
Further, the ghost number zero operator is the identity.
This means that this commutative product is trivial. 

There is an additional structure on the BRST cohomology \cite{LZ1}, namely a Lie bracket of ghost number $-1$:
\begin{equation}
[\cdot , \cdot] : H^i(\cS) \otimes H^j (\cS) \to H^{i+j - 1} (\cS).
\end{equation}
This is defined by the method of topological descent, which we briefly recall.

For any BRST-closed local operator $\mc{O}$ of ghost number $k$ one can build an operator $\cO^{(1)}$ of ghost number $k-1$ such that the descent equation holds
\begin{equation*} 
	\partial_z \cO = Q \cO^{(1)}
\end{equation*}
where $\partial_z$ is the holomorphic derivative.
For this reason, $\cO^{(1)}$ is called the descended operator obtained from $\cO$.
Explicitly,
\begin{equation} 
	\cO^{(1)} = \oint_z \b(z) \cO(0) \d z  
\end{equation}
so that $\cO^{(1)}$ is obtained by differentiating $\cO^{(1)}$ with respect to $c$.
It is easy to verify the descent equation. 

The Lie bracket of ghost number $-1$ is defined by
\begin{equation}
\left[\cO, \cO' \right] =  \oint_{z} \cO^{(1)}(z) \d z \cdot \cO'(0)  .
\end{equation}
Here $\cO$, $\cO'$ are BRST closed operators.
This expression descends to a Lie bracket on BRST cohomology. 
Notice from the equation above that if $\cO$ is of ghost degree $i$ and $\cO'$ is of ghost degree $j$ then the bracket $[\cO, \cO']$ is of degree $i+j-1$.

One can check that this bracket satisfies a graded version of skew-symmetry and the Jacobi identity which endows the shift of cohomology $H^\bu(\cS)[1]$ with the structure of a graded Lie algebra.
Moreover, the bracket is a graded derivation with respect to the graded commutative product on $H^\bu(\cS)$.
Together, these give $H^\bu(\cS)$ the structure of a Gerstenhaber algebra, \cite{LZ1}.

\subsection*{The Lie algebra of the chiral heterotic string}

We return to the chiral sector of the heterotic string.
So far, we have seen that the BRST cohomology of the chiral sector of the heterotic string is concentrated in ghost degrees zero and one.
In degree zero there is just the class of the unit operator.

Summarizing the preceding discussion, we obtain a Lie algebra structure on $H^1(\cS)$ whose bracket we denote by $[-,-]$.
It is characterized as follows.

\begin{prop}
  \label{prop:hetlie}
The Lie bracket on
\begin{equation}
H^1(\cS) \cong {\rm Vect}^{\rm div} (\CC^5) \oplus \Omega^1 (\CC^5) / {\rm Im} (\partial) \oplus \Fun(\CC^5) \otimes \fs \fo (32)
\end{equation}
obtained by topological descent has the following commutators:
\begin{itemize}
	\item for $X = f^i(w^i) \partial_i , Y = g^j (w^j) \partial_j  \in \Vect^{\rm div}(\CC^5)$ the bracket is
\begin{equation}
	[X, Y] =  f^i(\partial_i g^j) \partial_j - g^j (\partial_j f^i) \partial_i    + 2 \left[\partial_j f^i \partial \partial_i g^j\right]   \in {\rm Vect}^{\rm div} (\CC^5) \oplus \Omega^1 (\CC^5) / {\rm Im}(\partial) .
\end{equation}
Note that the first two terms in this expression are the usual commutator of the vector fields $X,Y$.  The remaining term is a one-form, which is only anti-symmetric in $X,Y$ when taken up to exact one forms.

\item for $X \in \Vect^{\rm div}(\CC^5)$ and $\omega \in \Omega^1 (\CC^5) / {\rm Im} (\d)$ the bracket is $[X, \omega] = L_X \omega$, where $L_X (\cdot)$ is the Lie derivative,
\item for $X \in \Vect^{\rm div} (\CC^5)$ and $\alpha = (\alpha_{ab})\in \Fun(\CC^5) \otimes \fs \fo (32)$ the bracket is $[X, \alpha] = (X \cdot \alpha_{ab})$, and
\item for $\alpha, \alpha' \in \Fun(\CC^5)\otimes \fs \fo (32)$ the bracket is
\begin{equation}
[\alpha, \alpha'] = \left(\alpha \alpha' - \alpha' \alpha\right) + \left[\op{tr} \left(\alpha \partial \alpha'\right) \right] \in \Fun(\CC^5) \otimes \so(32) \oplus  \Omega^1 (\CC^5) / {\rm Im} (\d) .
\end{equation}
The first two terms are the usual matrix commutator. The final term is a one-form which is only anti-symmetric in $\alpha,\alpha'$ when taken up to exact one-forms. 
\end{itemize}
\end{prop}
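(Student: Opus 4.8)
The proof will be a direct computation from the definition of the bracket by topological descent, using the explicit cochain representatives furnished by Lemma~\ref{lem:gh1}. The first step is to identify the descended operators. Every spin-zero, ghost-number-one operator in $\cS^1$ has the form $P\,\c$ with $P$ of ghost number zero; since $\b$ has regular OPE with all the matter fields and $\b(z)\c(0)\simeq z^{-1}$, the descent map $\cO\mapsto\cO^{(1)}=\oint_z\b(z)\cO(0)\,\d z$ simply removes the factor of $\c$. Hence $\cO_X^{(1)}=X^j(\gamma)\beta_j$, $\cO_\omega^{(1)}=\tfrac12 f(\gamma)\,\partial\gamma^i$ for $\omega=f\,\d w^i$, and $(\cO_\alpha)^{(1)}_{ab}=\tfrac12\alpha_{ab}(\gamma)\,\psi_a\psi_b$. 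The bracket is then $[\cO,\cO']=\oint_z\cO^{(1)}(z)\cdot\cO'(0)\,\d z$, i.e.\ one extracts the $z^{-1}$ coefficient of the relevant operator product in each of the five cases.

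For each pair of classes the OPE is evaluated by Wick contraction, and the result splits into two kinds of terms. The contractions that pair a single $\beta$ (resp.\ a single fermion of a bilinear) against the $\gamma$-dependence (resp.\ the fermions) of the other operator, together with the subleading poles produced by $\beta(z)\,\partial\gamma(w)\simeq(z-w)^{-2}$, assemble -- once the signs in $\beta(z)\gamma(w)\simeq(z-w)^{-1}$ and $\psi_a(z)\psi_b(w)\simeq\delta_{ab}(z-w)^{-1}$ are tracked -- into the classical expressions: the Lie bracket of vector fields $f^i(\partial_i g^j)\partial_j-g^j(\partial_j f^i)\partial_i$, the Lie derivative $L_X\omega$, the pointwise action $X\cdot\alpha_{ab}$, and the matrix commutator $\alpha\alpha'-\alpha'\alpha$. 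The genuinely new one-form corrections occur only in the $\Vect$--$\Vect$ and $\so(32)$--$\so(32)$ brackets and come from a \emph{double} Wick contraction pairing the two $\beta$'s (resp.\ two of the four fermions) across the two operators: for $X,Y$ one contracts $\beta_i(z)$ against $g^j(\gamma(0))$ and simultaneously $\beta_j(0)$ against $f^i(\gamma(z))$, producing a double pole $z^{-2}$, and Taylor-expanding the surviving $(\partial_j f^i)(\gamma(z))$ to first order leaves $-(\partial_k\partial_j f^i)(\partial_i g^j)\,\partial\gamma^k\,\c$ as the $z^{-1}$ coefficient; recalling that the representative of a one-form $h\,\d w^k$ is $\tfrac12 h\,\partial\gamma^k\,\c$, and integrating by parts, this is exactly the class $2[\partial_j f^i\,\partial\partial_i g^j]$, with the factor $2$ the reciprocal of the $\tfrac12$ normalization. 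The $\so(32)$ case is identical in structure: the double fermion contraction gives the $z^{-2}$ pole and the $\gamma$-expansion produces $[\op{tr}(\alpha\,\partial\alpha')]$, where one keeps track of the two inequivalent fermion pairings and the $\tfrac12$'s. Finally $[\omega,\omega']=0$ and $[\omega,\alpha]=0$, because $\cO_\omega^{(1)}=\tfrac12 f(\gamma)\,\partial\gamma^i$ has only regular OPE with any operator built from $\gamma$ and the fermions alone -- consistent with the absence of those terms in the statement.

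It then remains to verify consistency. The one-form contributions are skew-symmetric only after passing to $\Omega^1(\CC^5)/{\rm Im}(\partial)$, precisely the summand appearing in $H^1(\cS)$; this follows from the Leibniz identity $\partial_k(\partial_j f^i\,\partial_i g^j)=(\partial_k\partial_j f^i)\,\partial_i g^j+\partial_j f^i\,(\partial_k\partial_i g^j)$ and its matrix analogue for $\op{tr}(\alpha\,\partial\alpha')$, which show that interchanging the two arguments alters the one-form by an exact form. That the vector-field part of $[X,Y]$ is again divergence-free -- and, more generally, that each bracket closes within the three stated summands -- is automatic, since the construction takes values in the cohomology $H^1(\cS)$ described by the preceding theorem; for the vector fields it is also the elementary identity ${\rm div}[X,Y]=X({\rm div}\,Y)-Y({\rm div}\,X)=0$. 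Since topological descent and the residue pairing are chain maps with respect to $Q$, the resulting formulas are independent of the choice of representatives and pass to $H^1(\cS)$.

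The only real obstacle is the bookkeeping in the double-contraction terms: pinning down the numerical coefficients and signs, in particular reconciling the $z^{-2}$ double poles with the $\tfrac12$ normalizations built into $\cO_\omega$ and $\cO_\alpha$ and with the antisymmetry of the fermion bilinears $\psi_a\psi_b$, which forces care with both the $\tfrac12$'s and the two Wick pairings. A secondary point needing attention is checking that the bare residue represents the asserted class in the correct summand -- i.e.\ that the reduction to the level-matched complex $\cS^\bu$ (dropping $\partial\c$-terms) and modulo ${\rm Im}(\partial)$ has been carried out correctly.
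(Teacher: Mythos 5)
Your proposal is correct and follows essentially the same route as the paper's proof: identify the descended operators by stripping the $\c$-ghost via the $\b$--$\c$ OPE, then extract the residue of each operator product, with single Wick contractions producing the classical brackets and the double $\beta$--$\gamma$ (resp.\ $\psi$--$\psi$) contractions producing the one-form corrections, which are skew only modulo ${\rm Im}(\partial)$. Your representative $-(\partial_k\partial_j f^i)(\partial_i g^j)\,\d w^k$ differs from the paper's $(\partial_j f^i)(\partial_k\partial_i g^j)\,\d w^k$ by an exact one-form, so the two agree in $\Omega^1(\CC^5)/{\rm Im}(\partial)$.
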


From hereupon we will denote by
\begin{equation}
  \fg_{\rm het} \define H^1(\cS)
\end{equation}
the Lie algebra described in the above proposition.
This is the Lie algebra of global gauge transformations of the chiral sector of the heterotic string.

\begin{proof}
The proof of the proposition is  a direct calculation.

Recall that every vector field $X = X^i(w)\partial_i$ determines the ghost degree one, spin zero local operator $\cO_X = X^i (\gamma) \beta_i \sf c$.
Moreover, we have the following OPE
\begin{equation}
b(0) \cdot \cO_X (z) \simeq \frac{X^i(\gamma) \beta_i} {z} + \cdots.
\end{equation}
Thus, if $Y = Y^j \partial_j$  is another vector field we have
\begin{align}
\left[\cO_X, \cO_Y \right] & = \left(\oint_{z} \left(X^i(\gamma) \beta_i \right) (z) \d z \right) \cdot Y^j (\gamma) \beta_j \sfc \\ & = X^i(\gamma) (\partial_i Y^j)(\gamma) \beta_j \sfc - (\partial_j X^i)(\gamma)Y^j (\gamma) \beta_j \sfc \\ & + (\partial_j X^i) (\gamma) (\partial_k \partial_i Y^j) (\gamma) \partial \gamma^k \sfc .
\end{align}
Each term in the last two lines arises from the $\beta-\gamma$ OPE.
The second to last line follows from a single contraction of the OPE and the last line follows from a double contraction.
We recognize the second to last line as the local operator corresponding to the ordinary commutator of the vector fields $X$ and $Y$.
The last line as the local operator corresponding to the one-form class
\begin{equation}
 2 \left[\partial_j f^i \partial \partial_i g^j\right] = 2\left[\partial_j f^i (w) \partial_k \partial_i g^j(w) \d w^k \right] \in \Omega^1 (\CC^5) / {\rm Im}(\partial)
\end{equation}
as desired.

Next, suppose $[\omega] \in \Omega^1 (\CC^5) / {\rm Im} \; \d$ and pick a representing one-form $\omega = g_j \d w^j$.
Then
\begin{align*}
[\cO_X, \cO_\omega] & = \left(\oint_{z} \left(f^i(\gamma) \beta_i \right) (z) \d z \right) \cdot g_j (\gamma) \partial \gamma^j \sfc \\
& = f^i(\gamma) (\partial_i g_j)(\gamma) \partial \gamma^j \sfc + (\partial_j f^i)(\gamma) g_i (\gamma) \partial \gamma^j \sfc .
\end{align*}
Here, the first term uses $\beta-\gamma$ OPE arising from $g^j(\gamma)$ in the one-form component and the second term uses the $\beta-\gamma$ OPE arising from $\partial \gamma^j$ in the one-form component.
The right-hand side is precisely the operator corresponding to the class $[L_X \omega]$. 
Notice that if we choose a different representative one form $\omega'$ then $[\cO_X, \cO_\omega] - [\cO_X, \cO_{\omega'}]$ is $\d$-exact. 

Next, if $\alpha = (\alpha_{ab}) \in \Fun(\CC^5) \otimes \so(32)$ then for each $a,b$ we have
\begin{align*}
[\cO_X, (\cO_\alpha)_{ab}] & = \frac12 \left(\oint_{z} \left(f^i(\gamma) \beta_i \right) (z) \d z \right) \cdot \alpha_{ab} (\gamma) \psi_{a} \psi_{b} \sf c \\ & = \frac12 (\partial_i \alpha_{ab})(\gamma) \psi_{a} \psi_{b} \sfc
\end{align*}
where, again, we only utilize the $\beta-\gamma$ OPE.
We recognize the second line as the local operator corresponding to $(ab)$ entry of the matrix $X \cdot \alpha$, as desired.

Finally, we observe the following OPE involving the operator $(\cO_{\alpha})_{ab}$
\begin{equation}
\sfb (0) \cdot (\cO_{\alpha})_{ab}(z) \simeq \frac{\frac12 \alpha_{ab} (\gamma) \psi_a\psi_b}{z} + \cdots .
\end{equation}
If $\alpha' = (\alpha'_{cd})$ is another matrix valued function we have
\begin{align*}
[(\cO_{\alpha})_{ab}, (\cO_{\alpha'})_{cd}] &= \left(\oint_z \left(\alpha_{ab} (\gamma) \psi_a \psi_b\right) (z) \d z \right) \cdot \alpha'_{cd} (\gamma) \psi_c \psi_d \sfc \\
& = \frac14 (\alpha_{ab} \alpha'_{cd})(\gamma) \left(\delta_{ad} \psi_{b} \psi_{c} - \delta_{ac} \psi_b \psi_d + \delta_{bc} \psi_{a} \psi_{d} - \delta_{bd} \psi_a \psi_c\right) \sfc \\ & + \frac14 \left(\delta_{ac} \delta_{bd} + \delta_{ad}\delta_{bc}\right) (\alpha_{ab} \partial_k \alpha'_{ba})(\gamma) \partial \gamma^k \sfc  .
\end{align*}
Each term in the last two lines arises from the $\psi-\psi$ OPE.
The second to last line follows from a single contraction of the OPE and the last line follows from a double contraction.
We recognize the second to last line as the local operator corresponding to the commutator of matrices $[\alpha, \alpha']$.
The last line is the local operator corresponding to the one-form class
\begin{equation}
\left[ {\rm tr}(\alpha \partial \alpha')\right] = \left[{\rm tr}(\alpha \partial_k \alpha') \d w^k \right] \in \Omega^1(\CC^5) / {\rm Im}(\partial) .
\end{equation}
\end{proof}

By the holomorphic Poincar\'e lemma, the quotient $\Omega^1(\CC^5) / {\rm Im}(\partial)$ is isomorphic to the space of close two-forms $\Omega^{2}_{\rm cl}(\CC^5)$. 
The identification takes a class $[\omega]$ to the two-form $\partial \omega$, which, of course, does not depend on the representative. 

If we replace this quotient space by the space of closed two-forms, the relevant brackets in the Lie algebra now read
\begin{align*}
[X, Y] & =  f^i(\partial_i g^j) \partial_j - g^j (\partial_j f^i) \partial_i   + 2 \partial (\partial_j f^i) \wedge \partial ( \partial_i g^j) \\
[\alpha, \alpha'] & = \left(\alpha \alpha' - \alpha' \alpha\right) + \op{tr} \left(\partial \alpha \wedge \partial \alpha'\right) .
\end{align*}

\section{Type $I$ -- heterotic duality}

\label{sec:KS}

So far, we have determined the Lie algebra of ghost number one states of the chiral part of the heterotic string. Our main theorem identifies this with the Lie algebra of global gauge symmetries of another string theory: the type $I$ topological $B$-model string introduced in \cite{CLtypeI}.

From the worldsheet perspective, the type I $B$-model is the un-oriented version of the ordinary topological $B$-model. Thus, instead of gauging by oriented world-sheet diffeomorphisms, we gauge by all world-sheet diffeomorphisms. This means that the space of states of the type I topological $B$-model is the $\Z/2$ fixed points of the states of the ordinary $B$-model, where the $\Z/2$ acts by an orientation reversing symmetry of the worldhseet.

As in the physical type I string, this theory is only anomaly free when we also introduce certain open string fields.  On a Calabi-Yau $5$-fold\footnote{Topological strings on Calabi-Yau manifolds of dimension other than $3$ or not much studied in the physics literature, but they make perfect sense.  The only issue is that there is a ``ghost number anomaly''.  This means that the theory makes sense when ghost number is treated modulo $2$, or alternatively when we treat the string coupling constant as having ghost number $3-d$, where $d$ is the complex dimension of the target.}, the open string fields live in a rank $32$ bundle, and introduce an  ${\rm SO}(32)$ gauge theory on the target.

We are interested in the space-time theory of the type I topological string with target a Calabi-Yau $5$-fold $X$, normally taken to be $\C^5$.  This was studied in \cite{CLtypeI}, and was found to have fields which are the $\Z/2$ fixed points of the fields of Kodaira-Spencer theory \cite{BCOV}, coupled to an ${\rm SO}(32)$ holomorphic Chern-Simons gauge theory.  It was conjectured that this space-time theory describes the holomorphic twist in the sense of \cite{CLsugra} of type I supergravity, and that more generally the type I topological $B$-model we consider is equivalent to a twist of the physical type I string.

Given the expected duality \cite{WittenDynamics, PolchinskiWitten,HoravaWitten} between the type I string and the heterotic string, we would hope for a similar duality between the type I topological string and an appropriate part of the heterotic string.

We conjecture that the appropriate part of the heterotic string is simply the chiral part\footnote{It is not completely clear how to define the chiral part of the heterotic string by itself at loop level. Loop amplitudes should be defined by a contour integral over a middle-dimensional cycle in the moduli of Riemann surfaces with marked points. The choice of cycle plays the role of the string vertices, and as usual the string vertices should satisfy an appropriate quantum master equation.  We leave the investigation of appropriate integration cycles to future work.  }, leading to the conjectured duality:
\begin{conjecture}
	There is an equivalence between the type I topological string on a Calabi-Yau $5$-fold $X$ and the chiral part of the $SO(32)$ heterotic string with target $X$.
\end{conjecture}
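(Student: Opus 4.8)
The plan is to reduce the conjectured worldsheet equivalence to a statement about the two \emph{space-time} theories, and then to establish that equivalence order by order in the genus expansion. The point is that an equivalence of string theories should be detected by an equivalence of the associated space-time effective field theories, each of which is a Batalin--Vilkovisky theory (equivalently, an $L_\infty$ algebra of fields with a local pairing, or a factorization algebra on $X$). On the type I side this space-time theory is already identified in \cite{CLtypeI}: it is the $\Z/2$-fixed points of Kodaira--Spencer theory \cite{BCOV} coupled to $SO(32)$ holomorphic Chern--Simons. So the first task is to construct the space-time theory of the chiral heterotic string on a \emph{general} Calabi--Yau $5$-fold $X$, of which the computation of $\fg_{\rm het}$ on $\CC^5$ is the local, linearized shadow.

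First I would globalize the worldsheet construction. The curved $\beta$--$\gamma$ system is governed by the sheaf of chiral differential operators; together with the $\so(32)_1$ current algebra and the $\sfb$--$\sfc$ ghosts it defines a sheaf of vertex algebras over $X$ whose BRST cohomology is the heterotic factorization algebra. Taking the derived global sections (relative to $\CC^\times$, as in the level-matching discussion) produces the space-time $L_\infty$ algebra, and Proposition \ref{prop:hetlie} identifies its degree-one part over $\CC^5$. The crucial output of this step is the \emph{obstruction to globalization}: the chiral differential operator gerbe exists precisely when a second Chern-character class vanishes, and this is exactly the Green--Schwarz anomaly-cancellation condition relating the second Chern character of $TX$ to that of the $SO(32)$ bundle. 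Matching this class with the data entering the type I side is what ties the two constructions to the \emph{same} geometric input, and is the correct home for the remark in \cite{CLtypeI} that anomaly-cancelling terms exist but were not written down.

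With both space-time theories in hand, the core of the argument is a quasi-isomorphism of $L_\infty$ (indeed $BV$) algebras. The Lie-algebra dictionary already visible on $\CC^5$ is the blueprint: $\Omega^1(\CC^5)/{\rm Im}(\partial)\cong\Omega^2_{\rm cl}(\CC^5)$ should match the Kodaira--Spencer field, $\Vect^{\rm div}(\CC^5)$ the volume-preserving symmetries of Kodaira--Spencer theory, and $\Fun(\CC^5)\otimes\so(32)$ the holomorphic Chern--Simons gauge field. One then checks that the heterotic structure constants---in particular the one-forms $2[\partial_j f^i\,\partial\partial_i g^j]$ and $[\op{tr}(\alpha\,\partial\alpha')]$ of Proposition \ref{prop:hetlie}, which are antisymmetric only modulo exact forms---reproduce precisely the brackets of the $\Z/2$-fixed BCOV theory coupled to hCS, \emph{including} the mixed gravity--gauge coupling that is the space-time reflection of Green--Schwarz. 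This is where the one-loop analysis enters: the required coupling on the type I side is forced by anomaly cancellation and contributes at one loop, so the match of these particular structure constants is a comparison of a tree-level heterotic bracket with a one-loop type I correction.

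The final step is to upgrade this quasi-isomorphism from the classical statement to a full quantum equivalence, i.e.\ to match the two theories at all loop orders; I expect this to be the main obstacle. It has two distinct difficulties. The first is purely heterotic: as flagged in the footnote, the chiral string is not yet defined at loop level, and doing so requires choosing a middle-dimensional integration cycle in the moduli of marked Riemann surfaces satisfying a quantum master equation; until such a cycle is fixed, the chiral part of the heterotic string is only defined at tree level and the higher-genus side of the equivalence has no definition to compare against. The second is that promoting a degree-one Lie-algebra isomorphism to an equivalence of factorization algebras requires controlling the higher cohomology, the products, and the deformation theory of both $BV$ theories at once; the computation of $H^\bu(\cS)$ shows the heterotic cohomology is concentrated in degrees zero and one, which is encouraging, but the corresponding rigidity on the type I side must be established independently. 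A realistic program is therefore to prove the classical space-time equivalence unconditionally, and to reduce the full conjecture to the construction of the heterotic loop measure, which we leave as the key open input.
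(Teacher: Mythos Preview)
The statement is a \emph{conjecture}, and the paper does not prove it. The paper is explicit: ``such non-perturbative conjectures are very difficult to check,'' and what is offered is ``a complimentary check''---the isomorphism between $\fg_{\rm het}$ and the Lie algebra of global gauge symmetries of the type I topological string on $\CC^5$. There is therefore no ``paper's own proof'' to compare against; there is only evidence.

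Your proposal is not a proof either, and you say so: the last paragraph downgrades the goal to a ``realistic program'' that leaves the heterotic loop measure as an open input. That is honest, but it means the conjecture is not established. Two further gaps are worth naming. First, even your reduced target---an unconditional classical space-time equivalence on a general Calabi--Yau $5$-fold---goes well beyond what the paper accomplishes: the paper matches only the ghost-number-one Lie algebras on $\CC^5$, not full $L_\infty$ or BV structures, and not on curved $X$; the globalization you sketch via chiral differential operators is plausible but is not carried out anywhere in the paper. Second, the paper stresses that the conjecture is non-perturbative (D1 branes versus worldsheet instantons), so even a complete perturbative match of space-time BV theories would not settle it; your outline does not address this at all.

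Where your proposal overlaps with the paper's actual content, you have it right. The identification of the structure constants in Proposition~\ref{prop:hetlie} with those on the type I side, and in particular the observation that the extra one-form $2[\partial_j f^i\,\partial\partial_i g^j]$ in $[X,Y]$ is matched on the type I side by a cubic closed-string coupling that is \emph{forced} by one-loop anomaly cancellation, is exactly the computation the paper performs in Section~\ref{sec:KS} and the subsequent anomaly analysis. That match is the paper's main result and its sole evidence for the conjecture; everything else in your plan is a program, not an argument.
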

This conjecture must be non-perturbative. For instance, $D1$ branes in the type I topological string should correspond to world-sheet instantons in the chiral part of the heterotic string.  Of course, such non-perturbative conjectures are very difficult to check.  In \cite{CLtypeI} an initial check was provided:  it was noted that the theory on a single $D1$ brane in the type I topological string matches the chiral part of an embedded heterotic string (where the $\b-\c$ ghosts are gauged fixed by the chosen embedding).

In this paper we provide a complimentary check.  We have computed the Lie algebra of ghost number $1$ states of the chiral part of the heterotic string. Now, we will show that this Lie algebra is isomorphic to a similar Lie algebra built from the type I topological string.  

On the type I side, this Lie algebra can be computed by either space-time or world-sheet methods. We will focus on the space-time approach (although it would be illuminating to carefully perform the world-sheet computation which we hope to return to in later work).
From the space-time side, the Lie algebra is that of gauge symmetries of the space-time theory on $\C^5$, which preserve the zero field configuration.

\subsection{Type $I$ Kodaira-Spencer theory}

Let $X$ be a Calabi--Yau five-fold with holomorphic volume form $\Omega$.
In the original formulation of Kodaira--Spencer theory in \cite{BCOV}, one describes the fields by polyvector fields on $X$ which are divergence free with respect to $\Omega$.
In other words, on looks at the locus
\begin{equation}
  \op{Ker} \partial_{\Omega} \subset \PV^{\bu,\bu} (X) .
\end{equation}
Here, $\PV^{i,j}(X)$ stands for Dolbeault forms of type $(0,j)$ valued in $i$th exterior power of the holomorphic tangent bundle $\wedge^i T_X^{1,0}$.
If we work on a Calabi-Yau three-fold, the fields in $\PV^{i,j}$ have ghost number $i+j-2$.
On a Calabi-Yau $5$-fold, the ghost number anomaly means that the ghost number is $i+j$ modulo $2$.

\begin{rmk}
In the formulation of \cite{CLbcov1} the condition that a polyvector field be closed for the divergence operator is implemented homologically, by introducing extra fields and extra terms in the BRST operator.

One introduces a formal variable $u$ of even cohomological degree.
The extended space of fields of (type II) Kodaira--Spencer theory on a Calabi--Yau five-fold $X$ is
	\begin{equation} 
	\bigoplus_{\ell \geq 0}  u^\ell \PV^{k, \bu} (X) 	 
	\end{equation}
where elements of $u^\ell \PV^{i,j}$ are of parity $i+j$ modulo $2$.   

The linear BRST operator is $\dbar + u \partial_{\Omega}$ where $\partial_{\Omega}$ is the holomorphic divergence operator with respect to the holomorphic volume form $\Omega$.
The extra fields, so-called ``gravitational descendants'', are those fields which have a nontrivial dependence on the parameter $u$.
From the world-sheet perspective the descendent field $u$ has a very natural interpretation discussed in \cite{CLtypeI}, it arises from ``large'' world-sheet diffeomorphisms of a circle.

For simplicity we will stick with the presentation of \cite{BCOV} here without introducing these additional fields.
\end{rmk}

The fields of type I Kodaira-Spencer theory are the $\Z/2$ fixed points of the fields of ordinary Kodaira-Spencer theory  under a $\Z/2$ which acts as $(-1)^{k+1}$ on $\PV^{k,l}$.
This $\Z/2$ action arises from the action of an orientation-reversing symmetry of the worldsheet on the space of states, see \cite{CLtypeI} for details.  The $\Z/2$ fixed points are thus the fields of the un-oriented closed-string field theory:
\begin{equation} 
	\op{Ker} \partial \subset \PV^{3,\bu} \oplus \PV^{1,\bu}.
\end{equation}
We let $\mu$ denote the field in $\PV^{1,\bu}$ and $\eta$ that in $\PV^{3,\bu}$.  The component of $\mu$ in $\PV^{1,k}$ is of ghost number $1+k$ modulo $2$, as is the component of $\eta$ in $\PV^{3,k}$. The Lagrangian for just the closed-string sector is
\begin{equation} 
	\int \left[\Omega \vee \left( \eta \wedge \left( \partial^{-1} \dbar \mu + \tfrac{1}{2} \mu \wedge \mu \right) \right) \right] \wedge \Omega .
\end{equation}
Note that if $\mu$ is in $\PV^{1,k}$ then $\partial^{-1} \dbar \mu$ is in $\PV^{2,k+1}$, and $\mu \wedge \mu$ indicates the natural wedge product on polyvector fields. 

Varying $\eta$ leads to the equation of motion
\begin{equation}
  \dbar \mu + \frac12 [\mu,\mu]_{\rm NS} = 0  
\end{equation}
Specializing $\mu = \mu^{1,1}$ to be in $\PV^{1,1}$, for example, this equation tells us that $\mu^{1,1}$ is a Beltrami differential defining an integrable deformation of complex structure on $X$.

It will be important to understand the  BV anti-bracket. Expand $\mu$ and $\eta$ into components $\mu^i_{\br{I}} \partial_i \d \zbar^I$, $\eta^{ijk}_{\br{J}} \partial_i\partial_j \partial_k \d \zbar^{\br{J}}$ where $\br{I}$, $\br{J}$ are multi-indices.  We view these components as operators in the theory. Then, the BV anti-bracket is 
\begin{equation} 
	\{\mu^i_{\br{I}}(z), \eta^{jkl}_{\br{J}}(w)\} = \eps^{ijklm} (\partial_m \delta^{(5)}_{z = w})\eps_{\br{I} \br{J}} . \label{BV_closed} 
\end{equation}

The type I topological string can be coupled to holomorphic Chern-Simons theory on the Calabi-Yau $X$. The fundamental field of holomorphic Chern-Simons theory is 
\begin{equation} 
	A \in \Pi \Omega^{0,\bu}(X,\mf{g})
\end{equation}
where $\mf{g}$ is the gauge Lie algebra and $(0,k)$ forms have parity $k-1$ modulo $2$.  This field $A$ includes all the ghosts and anti-fields of the theory; because of the ghost number anomaly, it is not possible to write the theory in a way using only fields of ghost number $0$ and gauge symmetries.  

The Lagrangian is
\begin{equation} 
	\int \Omega \wedge CS(A) = \int \Omega \wedge \op{Tr} \left(\tfrac{1}{2} A \dbar  A + \tfrac{1}{6} A[A,A]\right). 
\end{equation}

At tree-level, the coupling to the fields of Kodaira-Spencer theory (as studied in \cite{CLtypeI}) is of the form
\begin{equation}
	\frac12 \int (\mu\vee \Omega ) \op{tr}(A \partial A) = \frac12 \int \mu^i \op{tr} (A \partial_i A) .    \label{OC1a}
  \end{equation}
Here, $\op{tr}(\cdot)$ stands for the trace in the fundamental (vector) representation.
The BV anti-bracket for the Kodaira-Spencer fields is, if we expand $A$ as $A_{\br{I}} \d \zbar^I$, 
\begin{equation} 
	\{A_{\br{I}}(z), A_{\br{J}}(w)\} = \delta^{(5)}_{z = w} \eps_{\br{I} \br{J}}.
\end{equation}

\subsection{The Lie algebra of global symmetries, first pass}
In any field theory, the parity shift of the space of fields has an $L_\infty$ structure so that the equations of motion of the theory are the Maurer-Cartan equations of the $L_\infty$ algebra \footnote{The existence of the $L_\infty$ structure is a tautology: this is a feature of any polynomial equation, not just of the equations of motion of a field theory.  See \cite{DAGX}. This $L_\infty$ formalism is used extensively in, for instance, \cite{CG2}.}.

Concretely, one builds this $L_\infty$ algebras as follows.  Expand the action functional $S$ as a sum of quadratic, cubic, etc.\ terms:
\begin{equation} 
	S = S^{(2)} + S^{(3)} + \dots 
\end{equation}
The differential $l_1$ of the $L_\infty$ algebra is given by the BV anti-bracket with $S^{(2)}$, and the Lie bracket $l_2$ is given by the BV anti-bracket with $S^{(3)}$, and so on.

The cohomology of this $L_\infty$ algebra is a $\Z/2$ graded Lie algebra, which we call here the Lie algebra of global gauge transformations\footnote{The terminology is justified by noting that when there is no ghost number anomaly, so that we have a $\Z$ instead of $\Z/2$ graded Lie algebra, the cohomology in degree $0$ is the global gauge transformations.}.  Our goal is to compute this for Kodaira-Spencer theory coupled to holomorphic Chern-Simons theory.  This Lie algebra 
is insensitive to the quartic and higher terms in the Lagrangian we start with. 

For type I Kodaira-Spencer theory, this $L_\infty$ algebra is a $\Z/2$ graded dg Lie algebra. The differential is given by the linear BRST operator, which is simply the $\dbar$ operator. The bracket given by the Nijenhuis--Schouten bracket of polyvector fields.
This is a bilinear operator of the form
\begin{equation}
[\cdot, \cdot]_{\rm NS} \colon \PV^{i,\bu} (X) \times \PV^{j,\bu}(X) \to \PV^{i+j-1,\bu}(X) .
\end{equation}
The Maurer--Cartan equations for this dg Lie algebra is, by definition, the equations of motion for type I Kodaira--Spencer theory on $X$.

Taking $X = \C^5$, we find that the cohomology of this Lie algebra lives in even degrees, and consists of
\begin{enumerate} 
	\item $X \in \op{Vect}^{\rm div}(\C^5)$, the divergence free holomorphic vector fields. 	
  \item $\omega \in \Omega^2_{\rm cl}(\C^5)$, a $\partial$-closed holomorphic two-form (we identify $\Omega^{2,k}$ with $\op{PV}^{3,k}$ here using the Calabi--Yau structure).
    Equivalently, we can view $\omega$ as a holomorphic $1$-form, taken up to exact holomorphic one-forms.
\end{enumerate}
The Lie bracket when restricted to $\op{Vect}^{\rm div}(\C^5)$ is simply the standard Lie bracket of vector fields.  The bracket of two elements of $\Omega^1(\C^5)/ \op{Im} \partial$ is zero, and the bracket of $X$ with $\omega$ is given by the Lie deriative.

Let us now include the holomorphic Chern-Simons fields.   
The $\dbar$-cohomology of $\Omega^{0,\bu}(\C^5, \mf{g})$ is simply  $\op{Fun}(\C^5, \mf{g})$, the Lie algebra of holomorphic (or polynomial) maps from $\C^5$ to $\mf{g}$. 
The coupling to the Kodaira--Spencer field $\mu$ as in Equation \eqref{OC1a} produces the following two terms in the Lie bracket:
\begin{enumerate} 
	\item The Lie bracket of $X \in \op{Vect}^{\rm div}(\C^5)$ with $\alpha \in \op{Fun}(\C^5) \otimes \mf{g}$ is given by differentiation $X \cdot \alpha$.
  \item The Lie bracket of two elements $\alpha,\alpha' \in \op{Fun}(\C^5) \otimes \mf{g}$ acquires an extra term in $\Omega^2_{\rm cl}(\C^5)$ given by $\op{Tr}(\partial \alpha \partial \alpha')$.
    Lifting this to a $1$-form modulo exact one-forms, it is the class of $\op{Tr}(\alpha \partial \alpha')$.
\end{enumerate}
(The second term is derived as follows. We consider the coupling $\int (\mu \vee \Omega) {\rm tr}(A \partial A)$ as in \eqref{OC1a}.  The BV anti-bracket of this with $\eta$, which we view as an element of $\Omega^2_{\rm cl}$, involves $\partial$, leading to the expression written above).

So far, we have found that the Lie algebra of global symmetries of the type I topological string is \emph{almost} the same as what we found from the chiral 
part of the heterotic string.   Indeed, if we specialize to $\mf{g} = \mf{so}(32)$, which is forced on us by anomaly cancellation, we find that the two Lie algebras have the same underlying vector space, and almost the same Lie bracket.

The only difference is that in the heterotic string calculation, we found a term whereby the Lie bracket of two vector fields $X^i \partial_i$, $Y^i \partial_i$ gives rise to a one-form:
\begin{equation} 
	  \partial_j X^i \partial_k \partial_i Y^j \d w^k    
\end{equation}
Since this one-form is taken up to exact one forms, we can view it as the closed two-form
\begin{equation} 
	 \partial_l \partial_j X^i \partial_k \partial_i Y^j \d w^l \d w^k. 
\end{equation}

This discrepancy can be corrected by adding a term proportional to 
\begin{equation}
	\int \mu^k \partial_i  \mu^j \partial_k \partial_j \mu^i \label{C4}  
\end{equation}
to the Lagrangian of type I Kodaira-Spencer theory. (Here we include only holomorphic indices; anti-holomorphic indices are all anti-symmetrized).  

Since it is a cubic term, it modifies the Lie bracket.  To figure out how it modifies the Lie bracket, we can study the equations of motion in the presence of this extra term.  To quadratic order in the fields, the equations of motion obtained by varying $\mu$ are 
\begin{equation} 
	\dbar \eta^{ijk}  + [\mu,\eta]_{\rm NS}^{ijk} +    \eps^{ijklm}	\partial_l  \partial_i  \mu^j \partial_m \partial_j \mu^i = 0 .
\end{equation}
The last term on the left hand side is the new term. For this to be the Maurer-Cartan equation of the new dg Lie algebra, we need an additional term in the Lie bracket which is precisely that we found studying the chiral heterotic string.

The main question that remains to answer is, where does this extra term in the type I topological string Lagrangian come from?   We will find it is forced on us to cancel a certain one-loop anomaly.  

\begin{rmk}
  From the worldsheet perspective, we believe that this term comes from an analysis of the scattering of three closed-string states with worldsheet $\RR \PP^2$ with three punctures.  We have not, however, performed this computation in detail.
\end{rmk}


\section{Anomaly cancellation in the Type I topological string}

Classically, type I Kodaira--Spencer theory can be coupled to holomorphic Chern--Simons theory for any Lie algebra.  However, holomorphic Chern-Simons theory has a one-loop gauge anomaly associated to a hexagon diagram \cite{GShex}.
The topological-string version of the Green-Schwarz mechanism \cite{CLtypeI} means that this can be cancelled when the gauge Lie algebra of holomorphic Chern-Simons theory is $\mf{so}(32)$.

This was shown as follows.  First, it was shown that the anomaly coming from the pure gauge hexagon anomaly in holomorphic Chern-Simons theory can be cancelled by a tree-level diagram involving the exchange of two closed string fields, as long as the gauge Lie algebra is $\mf{so}(32)$ (we recall this argument below).
Then, by a cohomological argument, it was shown that any further anomalies {\em can be} cancelled by the addition of counter-terms.
This step of the argument in \cite{CLtypeI} is implicit, in that no further anomalies or counter-terms were computed.  

In this paper we will need to make this implicit step explicit, by computing another anomaly and a correction to the action which cancels it.
This correction modifies the Lie algebra of global symmetries we have computed above, yielding it isomorphic to what we find in the chiral heterotic string.

\subsection*{One-loop anomalies}

For theories like holomorphic Chern--Simons theory and Kodaira--Spencer theory, one-loop anomalies are entirely explicit to characterize.
By general results of \cite{CLbcov1} and \cite{BWhol} the one-loop anomaly for any holomorphic field theory is given as the sum over wheel graphs with a fixed number of vertices.
For theories on $\CC^{5}$ one restricts to wheels with six vertices, matching with the usual take on anomalies for theories defined on ten-dimensional spacetimes.

The vertices of such wheels can be labeled by the two interactions we have introduced: the classical Chern--Simons coupling $\frac16 \int \Tr(A[A,A])$ and the tree-level open-closed string coupling $\frac12 \int \mu^i \tr(A \partial_i A)$. 
Since our gauge Lie algebra is $\mathfrak{so}(32)$ only wheels labeled by an even number of Chern--Simons vertices will be nonzero. 

Thus, the total one-loop anomaly is of the form
\[
\Theta = \Theta_0 + \Theta_2 + \Theta_4
\]
where $\Theta_k$ is the wheel with $k$ vertices labeled by the open-closed coupling, see Figure \ref{fig:anomaly}.

\begin{rmk}
We have not included a potential term in the anomaly that depends just on $\mu$.
This is a hexagon wheel whose vertices are all labeled by the $\mu A A$ vertex in \eqref{OC1a}, and hence gives rise to a local functional which is sixth order in $\mu$.
We will not pay this term much attention, but remark that it is compensated for by the gravitational anomaly which is built from purely closed string couplings and the closed string propagator.
The cancellation of these two terms forces the dimension of the gauge Lie algebra to be $496$.
\end{rmk}
In each of the graphs, the unmarked vertices are labeled by the holomorphic Chern--Simons interaction $\frac16 \int A [A,A]$.
The vertices involving $\mu$, displayed with a bullet $\bu$, are labeled by
\begin{equation}
 \frac12 \int \mu^i \op{tr} (A \partial_i A)
\end{equation}
which is the tree-level open-closed coupling that we introduced in (\ref{OC1a}).
Here, as usual, we drop anti-holomorphic indices, which are always contracted with $\eps_{\br{j}_1 \dots \br{j}_5}$ and we have dropped the holomorphic volume form for simplicity in the second line.

The first anomaly $\Theta_0$ is the standard pure gauge anomaly of holomorphic Chern--Simons theory.
For any gauge Lie algebra, this anomaly takes the form of the local functional
\begin{equation}
\Theta_0 = \frac{1}{6!} \int \eps^{ijklm} {\rm Tr} (A \partial_i A \partial_j A \partial_k A \partial_l A \partial_m A)
\end{equation}
where ${\rm Tr}(\cdot)$ denotes the trace in the adjoint representation.
The constant $1/6!$ can be deduced from an explicit calculation of the graph integral.\footnote{We do not include factors of $(2 \pi)^5$ that should appear in this analysis.}
The one-loop anomaly arises from the graph in Figure \ref{fig:anomaly} with all vertices labeled by the holomorphic Chern--Simons action $\frac16 \int {\rm Tr}(A [A, A])$ and all internal edges labeled by the holomorphic Chern--Simons propagator.

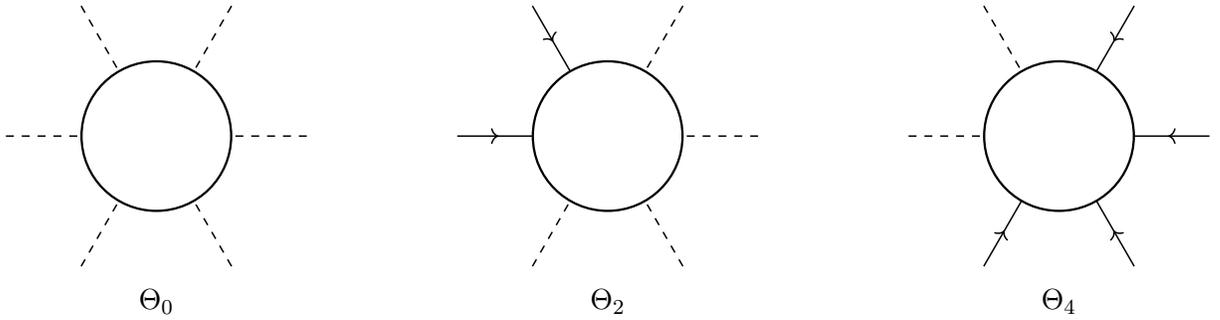
\begin{figure}
\begin{center}
\begin{tikzpicture}[line width=.2mm, scale=1]


		\draw[fill=black] (-6,0) circle (1cm);
		\draw[fill=white] (-6,0) circle (0.99cm);

        \coordinate (a) at (-6,0);

		\draw[dashed] (a)+(0:2) --+ (0:1);
		\draw[dashed] (a)+(60:2) --+ (60:1);
		\draw[dashed]  (a)+(120:2) --+ (120:1);
		\draw[dashed]  (a)+(180:2) --+ (180:1);
		\draw[dashed] (a)+(240:2) --+ (240:1);
		\draw[dashed] (a)+(300:2) --+ (300:1);
		
		\node at (-6,-2.2) {$\Theta_0$};
		
		\draw[fill=black] (0,0) circle (1cm);
		\draw[fill=white] (0,0) circle (0.99cm);

        		\coordinate (b) at (0,0);

		\draw[dashed] (b)+(0:2) --+ (0:1);
		\draw[dashed] (b)+(60:2) --+ (60:1);
		\draw[fermion]  (b)+(120:2) --+ (120:1);
		\draw[fermion]  (b)+(180:2) --+ (180:1);
		\draw[dashed] (b)+(240:2) --+ (240:1);
		\draw[dashed] (b)+(300:2) --+ (300:1);

		\draw[fill=black] (6,0) circle (1cm);
		\draw[fill=white] (6,0) circle (0.99cm);
		
		\node at (0,-2.2) {$\Theta_2$};
 		
		\coordinate (c) at (6,0);

		\draw[fermion] (c)+(0:2) --+ (0:1);
		\draw[fermion] (c)+(60:2) --+ (60:1);
		\draw[dashed]  (c)+(120:2) --+ (120:1);
		\draw[dashed]  (c)+(180:2) --+ (180:1);
		\draw[fermion] (c)+(240:2) --+ (240:1);
		\draw[fermion] (c)+(300:2) --+ (300:1);
		
		\node at (6,-2.2) {$\Theta_4$};


	    	\clip (-2,0) circle (1cm);
\end{tikzpicture}
\caption{Potential open-closed anomalies. 
Solid lines label closed-string fields $\mu$, dashed lines label open-string fields $A$.}
\label{fig:anomaly}
\end{center}
\end{figure}

Following \cite{CLtypeI}, one mechanism to trivialize the pure gauge anomaly $\Theta_0$ is to modify the open-closed string coupling to a term involving the closed-string field 
\[
\eta = \eta^{ijk} \partial_i \wedge \partial_j \wedge \partial_k \in {\rm PV}^{3,\bu}(\CC^3) .
\] 
The modification is
\begin{equation}
 \label{OCI}
I = \frac12 \int \mu^i \op{tr} (A \partial_i A) + b_1 \int \eta^{jkl} \op{tr} (A \partial_j A \partial_k A \partial_l A) .
\end{equation}
for some constant $b_1$, which is determined by matching the coefficients with the anomaly.
Notice that the first term above is precisely the tree-level open-closed coupling \eqref{OC1a}.

The closed-string BV anti-bracket of the two terms coupling open and closed string fields is proportional to
\begin{equation}
	\left\{ \frac12 \int \mu^i \op{tr}(A \partial_iA), b_1  \int \eta^{jkl} \op{tr} (A \partial_j A \partial_k A \partial_l A) \right\} = \frac{b_1}{2}\int \eps^{ijklm} \op{tr} (A \partial_i A) \op{tr} ( \partial_j A \partial_k A \partial_l A \partial_m A) .
\end{equation}
(The extra derivative on the right hand side arises because the closed string BV anti-bracket \eqref{BV_closed} involves a derivative).

Thus, using $I$ we can cancel the pure gauge anomaly via the BV bracket $\frac12 \{I , I\} = \Theta_0$ provided
\begin{equation}
  {\rm Tr} (X^{6}) \, \propto \, {\rm tr} (X^{2}) {\rm tr}(X^{4})
\end{equation}
for all $X \in \fg$.
On the left hand side appears the trace in the adjoint representation and on the right hand side appears the trace in the fundamental representation.
For $\fg = \so(n)$ this is only true when $n=32$, in which case ${\rm Tr}(X^6) = 15 \, {\rm tr}(X^2) {\rm tr}(X^4)$.
In order for the constants of proportionality to work out one must therefore have $b_1 = 1/4!$. 


The remaining terms $\Theta_2$, $\Theta_4$ in the anomaly are built from both the Chern--Simons interaction {\em and} the open-closed string coupling. 
These are shadows of gravitational and mixed gauge-gravitational anomalies at the level of the topological string.
In principle, we know this anomaly to be cohomologically trivial since we know a quantization exists.
In the remainder of this section we will construct an trivialization of the additional terms in the one-loop anomaly.


The anomalies $\Theta_2, \Theta_4$ depend on both the open and closed string fields:
\begin{itemize}
  \item[$\Theta_2$:]
    This term is quartic in the holomorphic Chern--Simons field $A$ and quadratic in the closed string field $\mu$.
    All internal edges are labeled by the holomorphic
    Chern--Simons propagator.
  \item [$\Theta_4$:] This term is quadratic in the holomorphic Chern--Simons field $A$ and quadratic in the closed string field $\mu$.
    All internal edges are labeled by the holomorphic Chern--Simons propagator.
\end{itemize}

Explicitly, we will only be interested in trivializing the anomaly $\Theta_2$.
The graph calculation of this anomaly is summarized by the following proposition.
The proof of this proposition is rather technical and can be found in Appendix \ref{appx:graph}.
The key ingredient is the of the precise form of the holomorphic Chern--Simons propagator on $\CC^{5}$ which we recall in the appendix.


\begin{prop}
\label{prop:anomaly}
The local contribution $\Theta_{2}$ of the weight of the graph (a) in Figure \ref{fig:anomaly} is
\begin{equation}
	\Theta_2 = \frac{1}{12} \frac{1}{2 \cdot 4!} \int \partial_i \mu^j \partial_{k_1} \partial_j \mu^i \op{Tr}\left(\partial_{k_2} A \partial_{k_3} A \partial_{k_4} A \partial_{k_5}  A\right) \varepsilon^{k_1 \dots k_5} . \label{mixed_anomaly_2}
\end{equation}
\end{prop}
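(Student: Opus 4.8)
The plan is to evaluate directly the Feynman weight of the six-vertex wheel drawn as the graph $\Theta_2$ in Figure~\ref{fig:anomaly}: two of its vertices are the open--closed couplings $\tfrac12\int\mu^i\op{tr}(A\partial_i A)$, the remaining four are the cubic holomorphic Chern--Simons vertices $\tfrac16\int\op{Tr}(A[A,A])$, all six internal edges are labelled by the holomorphic Chern--Simons propagator on $\CC^5$, and in the drawn graph the two open--closed vertices are adjacent. By the general description of one-loop anomalies for holomorphic field theories in \cite{CLbcov1,BWhol}, the anomaly $\Theta_2$ is the local functional obtained by extracting the logarithmically divergent part of this weight, and our task is to identify that functional together with its overall constant.

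First I would fix the propagator. In the gauge associated to the flat metric on $\CC^5$ the holomorphic Chern--Simons propagator is the (Casimir-valued) Bochner--Martinelli kernel recalled in Appendix~\ref{appx:graph}: up to normalisation it is $\|z-w\|^{-10}\sum_{a=1}^{5}(-1)^{a-1}(\bar z_a-\bar w_a)\,d\bar z^{\widehat a}$, where $d\bar z^{\widehat a}$ is the wedge of all $d\bar z_b$ with $b\neq a$. I would then write the graph weight as a single integral over the configuration space of the six vertex positions in $\CC^5$ modulo translation, of the product of the six propagators, with external half-edges carrying a field $A$ at each cubic vertex and a field $\mu^i$ together with the explicit holomorphic derivative $\partial_i$ (acting on one of the two adjacent propagators) at each open--closed vertex. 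The colour factor is obtained by contracting the structure constants of the four cubic vertices around the wheel, the two fundamental-trace vertices collapsing the $\delta^{ab}$'s on their flanking edges; the outcome is the trace of the four adjoint generators attached to the external $A$'s, which is what the notation $\op{Tr}(\partial_{k_2}A\,\partial_{k_3}A\,\partial_{k_4}A\,\partial_{k_5}A)$ records.

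Next I would isolate the local term. Taylor-expanding the external fields about a common point, the divergent part of the configuration integral retains precisely those terms in which the external fields supply a total of five extra holomorphic derivatives (matching $\dim_{\CC}\CC^5=5$), beyond the two already present at the $\mu$-vertices; every other term either converges or vanishes by antisymmetry of the antiholomorphic one-form legs supplied by the six propagators. Contracting those five holomorphic indices against the $\varepsilon$-tensor of the holomorphic volume form, and the two vertex derivatives against the polyvector indices of the two $\mu$'s, reproduces exactly the index structure $\partial_i\mu^j\,\partial_{k_1}\partial_j\mu^i\,\op{Tr}(\partial_{k_2}A\,\partial_{k_3}A\,\partial_{k_4}A\,\partial_{k_5}A)\,\varepsilon^{k_1\cdots k_5}$ of \eqref{mixed_anomaly_2}. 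The overall normalisation then splits into two pieces: the symmetry factor of the wheel together with the $\tfrac12$ and $\tfrac16$ at the vertices, which assemble into the prefactor $\tfrac{1}{2\cdot 4!}$; and the value of the residual scalar integral of Bochner--Martinelli kernels over configuration space, which I expect to evaluate --- after the rescaling and descent reduction of \cite{CLbcov1}, turning it into an integral over a product of spheres --- to the number $\tfrac{1}{12}$.

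The principal obstacle is this last computation: carrying out the combinatorics and the analysis precisely enough to pin down the rational coefficient, rather than merely its nonvanishing. One has to track carefully (i) how the five Taylor derivatives and the two vertex derivatives are distributed over the six propagators so that the $\varepsilon$-contraction is nonzero --- most distributions are killed by antisymmetry, and the survivors must be summed with the correct relative signs; (ii) the signs produced by permuting the odd Chern--Simons fields and by antisymmetrising antiholomorphic indices; and (iii) the final residue-type integral over the configuration space of $\CC^5$, which is where the explicit form of the propagator is indispensable. This bookkeeping is exactly the content of Appendix~\ref{appx:graph}.
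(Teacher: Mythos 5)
Your strategy reproduces the paper's derivation of the \emph{tensor structure} of $\Theta_2$, but it departs from the paper on the two points that carry the actual content of Proposition \ref{prop:anomaly}, and on one of them there is a genuine gap. First, the prescription: in the BV/heat-kernel formalism of Appendix \ref{appx:graph}, the anomaly is the obstruction to the quantum master equation, computed from the wheel in which \emph{one} edge carries the BV heat kernel $K_\epsilon$ and the remaining five carry the regularized propagator $P_{\epsilon<L}$, followed by $\epsilon\to 0$ and then $L\to 0$. Your graph with six propagator edges is a contribution to the one-loop effective action, which for holomorphic theories on $\CC^5$ is \emph{finite} --- the paper relies on exactly this finiteness --- so there is no ``logarithmically divergent part'' to extract, and your proposed extraction step fails as stated in this scheme. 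The index structure you land on, $\partial_i\mu^j\,\partial_{k_1}\partial_j\mu^i\,\op{Tr}(\partial_{k_2}A\,\partial_{k_3}A\,\partial_{k_4}A\,\partial_{k_5}A)\,\varepsilon^{k_1\cdots k_5}$, is nonetheless correct and is obtained in the paper by writing the $(\epsilon,L)$-weight in center-of-mass coordinates and performing the Gaussian integral to leading order in the Wick expansion.

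Second, and more importantly, the coefficient. The precise number $\tfrac{1}{12}\cdot\tfrac{1}{2\cdot 4!}$ is the whole point of the proposition --- it is what forces $b_2=1/3$ and hence the match with the heterotic Lie algebra --- and your proposal does not derive it: you correctly flag the residual configuration-space integral as ``the principal obstacle'' and then only \emph{expect} it to give $1/12$. The paper deliberately avoids this computation. It stops the direct graph evaluation at the stage of an undetermined scalar factor $F(\epsilon,L)$ multiplying the correct local functional, and then fixes the constant by the local index theorem: the mixed anomaly must equal the relevant component of $[\Td(X)\ch(P)]_{12}$, whose $\ch_4(P)$ piece is $\Td_2\,\ch_4(P)=\tfrac{1}{12}\ch_2(T_X)\ch_4(P)$, with local representative $\tfrac{1}{12}\,\tfrac{1}{2\cdot 4!}\op{Tr}(R^2)\op{Tr}(F^4)$. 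If you insist on your route, you must actually carry out the sphere integral with all signs and the wheel symmetry factor; as written, your argument establishes the shape of $\Theta_2$ but not the number, which is the nontrivial claim.
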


\begin{rmk}
While we will not need the explicit form of the piece of the anomaly which is quartic in the closed string field, we remark that its local contribution $\Theta_{4}$ is a linear combination of functionals of the form
\[
\int \left(\partial_i \mu^j \partial_{k_1} \partial_j \mu^\ell \partial_{k_2} \partial_\ell \mu^m \partial_{k_3} \partial_m \mu^i \right) \op{Tr}(\partial_{k_4} A \partial_{k_5} A) \varepsilon^{k_1 \dots k_5}
\]
and
\[
\int \left(\partial_i \mu^j \partial_{k_1} \partial_j \mu^i \partial_{k_2} \partial_\ell \mu^m \partial_{k_3} \partial_m \mu^\ell \right) \op{Tr}(\partial_{k_4} A \partial_{k_5} A) \varepsilon^{k_1 \dots k_5}
\]
\end{rmk}

In turns out that just in the case of the pure gauge anomaly, one can introduce an additional terms in the Lagrangian which trivialize the mixed anomaly $\Theta_2$ (and $\Theta_4$).

We wrap all the necessary couplings into a new functional of the form
\begin{equation}
\begin{array}{cclll}
J & = & b_2 \int \mu^k \partial_i \mu^j \partial_k \partial_j \mu^i + b_3 \int \eta^{k \ell m}  \partial_i \mu^j \partial_k \partial_j \mu^i \tr (\partial_\ell A \partial_m A) \\ & + & \int \eta^{k_1 k_2 k_3} \left(b_4 \partial_i \mu^j \partial_{k_1} \partial_j \mu^\ell \partial_{k_2} \partial_\ell \mu^m \partial_{k_3} \partial_m \mu^i + b_5 \partial_i \mu^j \partial_{k_1} \partial_j \mu^i \partial_{k_2} \partial_\ell \mu^m \partial_{k_3} \partial_m \mu^\ell\right) .
\label{OCJ}
\end{array}
\end{equation}
where $b_2, b_3, b_4, b_5$ are two yet to be determined coefficients.
The functional in the second line will not play an essential role for us.
Tuning the coefficients $b_4,b_5$ amount to trivializing the term in the anomaly $\Theta_4$. 

  We show that the total coupling $I+J$, where $I$ was defined in \eqref{OCI}, trivializes the total anomaly 
  \[
    \frac12 \{I + J, I + J\} \mod \hbar^2 = \Theta_0 + \Theta_2 + \Theta_4.
  \]
  We expand the right-hand side as $\frac12 \{I,I\} + \{I,J\} + \frac12 \{J, J\} \mod \hbar^2$.
  We have already seen that the first term trivializes the pure gauge anomaly $\Theta_{0}$.

The anomaly $\Theta_2$ is canceled by terms in the BV anti-bracket between $I$ (see Equation \eqref{OCI}) and $J$ (see Equation \eqref{OCJ}):
  \begin{itemize}
      \item The bracket of the second term in \eqref{OCI} with the first term in \eqref{OCJ}:
\begin{equation}
  \label{eqn:j1}
  \begin{array}{lllll}
	& \left\{ \frac{1}{4!} \int \eta^{lmn} \op{tr} (A \partial_l A \partial_m A \partial_n A) \; , \; b_2 \int \mu^k \partial_i  \mu^j \partial_{k} \partial_j \mu^i \right\} = \\ & \frac{3b_2}{4!} \int \left( \partial_i  \mu^j\partial_{k_1} \partial_{j} \mu^i \right)  \op{tr} \left(\partial_{k_2} A \partial_{k_3} A \partial_{k_4} A \partial_{k_5}  A\right) \varepsilon^{k_1 \dots k_5}
	\end{array}
  \end{equation}
    \item The bracket of the first term in \eqref{OCI} and the second term in \eqref{OCJ}:
      \begin{equation}
  \label{eqn:j2}
  \begin{array}{lllll}
	&
  \left\{ \frac12 \int \mu^i \tr(A \partial_i A) \; , \; b_3 \int \eta^{k \ell m} \partial_i \mu^j \partial_k \partial_j \mu^i \tr(\partial_\ell A \partial_m A) \right\} = \\ 
  & \frac{b_3}{2} \int \left( \partial_i  \mu^j\partial_{k_1} \partial_{j} \mu^i \right)  \op{tr} \left(\partial_{k_2} A \partial_{k_3} A \right) \op{tr}\left( \partial_{k_4} A \partial_{k_5}  A\right) \varepsilon^{k_1 \dots k_5}
  \end{array}
    \end{equation}
   

  \end{itemize}

Notice that on the right hand side of \eqref{eqn:j1}--\eqref{eqn:j2} we see the trace in the fundamental, not the adjoint, representation of $\so(32)$.
The sum of \eqref{eqn:j1} and \eqref{eqn:j2} is
\begin{equation}\label{eqn:j4}
  \int \left(\partial_i  \mu^j\partial_{k_1} \partial_{j} \mu^i \right) \left[\frac{3 b_2}{4!} \op{tr} \left(\partial_{k_2} A \partial_{k_3} A \partial_{k_4} A \partial_{k_5}  A\right) + \frac{b_3}{2} \op{tr} \left(\partial_{k_2} A \partial_{k_3} A \right) \op{tr}\left( \partial_{k_4} A \partial_{k_5}  A\right) \right] \varepsilon^{k_1 \dots k_5} .
\end{equation}

By requiring this term matches with the mixed anomaly $\Theta_{2}$ we will obtain relations among the various coefficients. 
Just as in the case of the pure gauge anomaly, we will use a certain trace identity for $\mf{so}(32)$.
This time it involves polynomials of order four:
\[
  \op{Tr}(X^4) = 24 \op{tr}(X^4) + 3 [\op{tr}(X^2)]^2 .
\]
On the left-hand side appears the trace in the adjoint representation, and on the right-hand side appears the trace in the fundamental representation.
Using this, we observe that for \eqref{eqn:j4} to be proportional to the mixed anomaly $\Theta_2$ we must have $b_2 = 32 b_3$. 
With this relation, \eqref{eqn:j4} reduces to
\begin{equation}
  \label{eqn:j5}
  \frac{b_3}{6} \int \left(\partial_i  \mu^j\partial_{k_1} \partial_{j} \mu^i \right) \op{Tr} \left(\partial_{k_2} A \partial_{k_3} A \partial_{k_4} A \partial_{k_5}  A\right) \varepsilon^{k_1 \cdots k_5}
\end{equation}
which is equal to $\Theta_2$ if and only if 
\[
b_3 = 6 \cdot  \frac{1}{12} \frac{1}{2 \cdot 4!} 
\]
This implies $\boxed{b_2 = 1/3}$.

We conclude that the term $\frac13 \int \mu^k \partial_i  \mu^j \partial_k \partial_j \mu^i$ must be added to the action in order to cancel the one-loop anomaly. 
This term in the action gives rise to the term in the Lie bracket between two vector fields $[X^i \partial_i , Y^j \partial_j]$ given by $2 \partial (\partial_j X^i \partial_i Y^j)$ as in Proposition~\ref{prop:hetlie}.

\appendix

\section{The mixed one-loop anomaly}\label{appx:graph}

In this appendix we supply the proof of Proposition \ref{prop:anomaly}.
The proof relies on the existence of a special `holomorphic gauge' which exists for holomorphic theories on $\CC^{n}$, see Proposition 4.4 \cite{BWhol}. 
This gauge has been used by Costello--Li \cite{CLbcov1} to characterize the {\em pure gauge} anomaly of the open-closed topological string. 
We recall the main ingredients of this result and then apply it to the case at hand in the type I topological string.

First off, we clarify what we mean by a holomorphic theory on $\CC^{n}$.
For more details we refer to Chapter 9 of \cite{CG2} or \cite{LiVertex,BWhol}.
This is a theory whose fields take the form
\begin{equation}
  \Omega^{0,\bu}(\CC^n, V)
\end{equation}
where $V$ is some graded holomorphic vector bundle on $\CC^n$.
The linear BRST operator is of the form
\begin{equation}
  \dbar + D
\end{equation}
where $\dbar$ is the $\dbar$-operator for the holomorphic vector bundle $V$ and $D$ is some translation invariant holomorphic differential operator acting on $V$.
The operator $D$ is required to have cohomological degree $+1$ and satisfy the equation $(\dbar + D)^2 = 0$.

In the BV formalism we require $V$ to be equipped with a non-degenerate graded skew-symmetric holomorphic pairing of degree $n-1$.
The pairing induces the BV anti-bracket on the space of fields $\Omega^{0,\bu}(\CC^n, V)$, which uses the obvious holomorphic form $\d^n z$.

So far we have just discussed aspects of {\em free} holomorphic field theories on $\CC^n$.
An interacting holomorphic field theory is, by definition, a deformation of a free theory by local functionals of the form $\int_{\CC^n} L^{\rm hol} \; \d^n z$ where $L^{\rm hol}$ is a {\em holomorphic Lagrangian density}.
This means that $L^{\rm hol}$ can be written as a sum of Lagrangian densities of the form
\begin{equation}
  D_1 (\varphi) \cdots D_k (\varphi) 
 \end{equation}
where the $D_i$'s are all holomorphic differential operators.

As an example, consider holomorphic Chern--Simons theory on $\CC^5$.
Here, the bundle $V$ is taken to be the trivial bundle with fiber the Lie algebra $\fg[1]$.
Here, we are only working in the $\ZZ/2$-graded sense, so $\fg[1]$ is understood as vector space concentrated in purely odd degree.
If $\fg$ is equipped with a non-degenerate symmetric invariant bilinear form $(-,-)_\fg$ then $\fg[1]$ is equipped with an odd invariant bilinear form that we denote by the same symbol.
We obtain an odd pairing on fields by the expression $\int_{\CC^5} (X,Y)_\fg \d^5 z$ for $X,Y \in \fg[1]$.

In the case $\fg = \so(n)$ we use the trace pairing to yield the BV pairing on the fields of holomorphic Chern--Simons theory $\Omega^{0,\bu}(\CC^5 , \fg)[1]$ by the formula
\begin{equation} \label{eqn:tracepairing}
  \int_{\CC^5} \op{tr} (A A') \d^5 z .
\end{equation}
Holomorphic Chern--Simons is an interacting holomorphic field theory with interaction given by the standard formula $\frac16 \int A [A,A]$.

We now turn to the issue of quantization for holomorphic theories on $\CC^{n}$.
An appealing aspect of this class of theories is that there exists a choice of a regularized propagator such that the theory is \emph{finite} at one loop. 
This means  that all one-loop counterterms are identically zero.

By construction, the trace pairing \eqref{eqn:tracepairing} is given by a distributional kernel. 
A particular resolution for this kernel is given by a solution to the heat equation on $\CC^5$ with respect to the standard Hodge Laplacian.
For $t > 0$, this resolved BV kernel takes the explicit form
\beqn\label{heat}
  K_t (z,w) = \frac{1}{(4 \pi t)^5} e^{-|z-w|^2 / 4L} \prod_{i=1}^5 (\d \zbar^i - \d \wbar^i) \otimes {\rm Id}_{\fg} \in \Omega^{0,\bu}(\CC^5 \times \CC^5) \otimes (\fg \otimes \fg) .
\eeqn
The regularized propagator is
\beqn\label{propagator}
  P_{\epsilon < L} (z, w) = \int_{t = \epsilon}^L \d t \frac{1}{(4 \pi t)^5} \sum_{i=1}^5 \left(\frac{\zbar^i - \wbar^i}{4t}\right) e^{-|z-w|^2/4t} \prod_{j\ne i} (\d \zbar^i - \d \wbar^j) \otimes {\rm Id}_\fg .
\eeqn

\begin{rmk}
The regularized propagator is obtained from the heat kernel for the Hodge Laplacian associated a choice of a Hermitian metric on $\CC^n$ by the formula $P_{\epsilon<L} = \int_{t=\epsilon}^L \dbar^* K_t \d t$.
Here, $\dbar^*$ is the adjoint of the $\dbar$-operator associated to the fixed Hermitian metric.
\end{rmk}

In our situation, the most important aspect of holomorphic field theories is that their one-loop anomalies to quantization can be characterized explicitly.
As a regularized quantity, the anomaly is a functional $\Theta[L]$ depending on the length scale $L$ and measures the failure of the renormalized action to satisfy the quantum master equation.
It is completely determined by its local behavior which is the local functional obtained by taking the $L \to 0$ limit of $\Theta[L]$.

The local contribution to the one-loop anomaly for any holomorphic field theory on $\CC^n$ is given as the $L \to 0$ limit of the Feynman weight of the graph that is a wheel with $n+1$ vertices.

The vertices are labeled by the classical interaction.
A single edge is labeled by the heat kernel $K_\epsilon$ and the remaining $n$ edges are labeled by the propagator $P_{\epsilon < L}$.
The weight is given by performing the necessary contractions and taking the $\epsilon \to 0$ limit.
The local contribution of the anomaly is the $L \to 0$ limit of the resulting quantity.

With the requisite notation in place, we now turn towards the proof of Proposition \ref{prop:anomaly}.
This proposition describes a term in the one-loop anomaly of holomorphic Chern--Simons theory on $\CC^5$ coupled to type I Kodaira--Spencer theory, see Figure \ref{fig:anomaly3}.

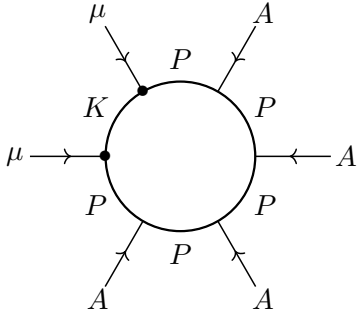
\begin{figure}
\begin{center}
\begin{tikzpicture}[line width=.2mm, scale=1]



		\draw[fill=black] (0,0) circle (1cm);
		\draw[fill=white] (0,0) circle (0.99cm);

		\draw[fermion](0:2) -- (0:1);
		\draw[fermion](60:2) -- (60:1);
		\draw[fermion](120:2) -- (120:1);
		\draw[fermion](180:2) -- (180:1);
		\draw[fermion](240:2) -- (240:1);
		\draw[fermion](300:2) -- (300:1);

		\node at (0:2.2) {$A$};
		\node at (60:2.2) {$A$};
		\node at (120:2.2) {$\mu$};
		\node at (180:2.2) {$\mu$};
		\node at (240:2.2) {$A$};
		\node at (300:2.2) {$A$};
		\node at (120:1) {$\bullet$};
		\node at (180:1) {$\bullet$};

        \node at (30:1.3) {$P$};
        \node at (90:1.3) {$P$};
        \node at (150:1.3) {$K$};
        \node at (210:1.3) {$P$};
        \node at (270:1.3) {$P$};
        \node at (330:1.3) {$P$};

	    	\clip (0,0) circle (1cm);
\end{tikzpicture}
\caption{The mixed anomaly. Here, the label $P$ on the edges stands for the propagator $P_{\epsilon <L}$ and $K$ stands for the heat kernel $K_\epsilon$.}
\label{fig:anomaly3}
\end{center}
\end{figure}

By the above discussion this term arises from a wheel graph with six vertices where two of the vertices are labeled by the $\mu AA$ interaction which is  

\begin{equation}
	= \frac12 \int \mu^i \tr (A\partial_i A) .
\end{equation}
The remaining vertices are labeled by the holomorphic Chern--Simons interaction\\$\frac16 \int \Tr(A [A,A])$.

Five of the edges are labeled by the holomorphic Chern--Simons propagator $P_{\epsilon <L}$ as in (\ref{propagator}).
The remaining edge is labeled by the holomorphic Chern--Simons BV kernel (\ref{heat}).

For nonzero $0 < \epsilon < L$, the weight of this diagram will be given as the integral of a certain $\epsilon$ and $L$-dependent differential form on $(\CC^5)^6$.
By the finiteness of the one-loop effective action for holomorphic theories the $\epsilon \to 0$ limit of this weight is guaranteed to exist.
The $L \to 0$ limit is the local functional representing the anomaly.

Label the coordinate of each vertex as $z_{(1)},\ldots, z_{(6)}$.
To simplify the weight computation, we use the ``center of mass'' coordinates where
\begin{align*}
w_{(1)} & = z_{(2)} - z_{(1)} \\
 & \vdots \\
w_{(4)} & = z_{(5)} - z_{(4)} \\
w_{(5)} & = z_{(5)} .
\end{align*}

The $\epsilon$,$L$-dependent weight of the diagram in Figure \ref{fig:anomaly3} is given by the following expression 
\begin{multline}
\sum_{j_1,j_2} \int_{(\CC^5)^6} \mu^{j_1} (z_{(1)}) \mu^{j_2} (z_{(2)}) \op{Tr} \left(A(z_{(3)}) A (z_{(4)}) A(z_{(5)}) A(z_{(6)})\right) \\
\times \int_{T_1, \ldots, T_5 = \epsilon}^L \d^5 T  \sum_{i_1, \ldots, i_5} \ep^{i_1\cdots i_5} \left(\frac{\wbar_{(1)}^{i_1} \wbar_{(1)}^{j_1}}{(4 T_1)^2} \right)\left(\frac{\wbar_{(2)}^{i_2} \wbar_{(2)}^{j_2}}{(4 T_2)^2} \right) \left(\frac{\wbar_{(3)}^{i_3}}{4 T_3} \right) \left(\frac{\wbar_{(4)}^{i_4}}{4 T_4} \right) \left(\frac{\wbar_{(5)}^{i_5}}{4 T_5} \right) \\ \times {\rm exp} \left(-\sum_{k=1}^5 \frac{|w_{(k)}|^2}{4 T_k} - \frac{1}{4\epsilon} \left|\sum_{k=1}^5 w_{(k)} \right|^2 \right)
\end{multline}

One proceeds by performing the Gaussian integral in the variables $w_{(1)}, \ldots, w_{(4)}$. 
The first nonzero term in the Wick expansion of the Gaussian integration is a nonzero multiple of
\begin{align*}
& \int_{\CC^5} \partial_{j_2} \mu^{j_1} \partial_{i_1} \partial_{j_1} \mu^{j_2} \op{Tr} \left(\partial_{i_2} A \partial_{i_3} A \partial_{i_4} A  \partial_{i_5} A\right) F(\epsilon, L)
\end{align*}
where $F(\epsilon, L)$ is some function which is independent of the fields. 
One can check that the $\epsilon \to 0$ limit of this expression exists, and the further $L \to 0$ limit is a nonzero multiple of the local functional of Proposition \ref{prop:anomaly}.

We do not perform a careful computation of the precise factor obtained by taking the $\epsilon \to 0, L \to 0$ limit of the function $F(\epsilon , L)$. 
Rather, we deduce the coefficient in front of the anomaly counterterm using the local index theorem. 

\subsection*{Anomaly polynomials}



We can use index theory to match the local anomaly of type I Kodaira--Spencer theory coupled to holomorphic Chern--Simons theory that we found in the previous section.
Suppose $X$ is a Calabi--Yau manifold of complex dimension five and $P$ is a principal $\SO(32)$ bundle on $X$. 
For $n = 2,3\ldots$, let
\[
\ch_n (T_X) \in H^{2n}(X) 
\]
denote the Chern characters of the holomorphic tangent bundle. 
Also, let 
\[
\ch_{n} (P) \in H^{2n} (X)
\]
be the Chern characters of $P$. 

The one-loop anomaly to quantizing the coupled theory on $X$ is given by 
\[
[\Td(X) \ch(T_X - \CC)]_{12} + [\Td(X) \ch(P)]_{12} \in H^{12} (X) .
\]

We record the following identities for the components of the Todd class:
\begin{itemize}
\item $\Td_0 = 1$. 
\item $\Td_2 = \frac{1}{12} \ch_2(T_X)$.
\item $\Td_4 = \frac{1}{240} \left(\frac{5}{6} \ch_2(T_X)^2 + 2 \ch_4(T_X)\right)$. 
\item $\Td_6 = - \frac{1}{2016} \left( \frac{7}{36} \ch_2(T_X)^3 + \frac{7}{5} \ch_2(T_X) \ch_4(T_X) + 8 \ch_6(T_X) \right)$. 
\end{itemize}

The piece of the anomaly involving $\ch_4(P)$ is
\[
\frac{1}{12} \ch_2(T_X) \ch_4(P) .
\]
The local representative for this term is
\[
\frac{1}{12} \left(\frac12 \op{Tr}(R^2)\right) \left(\frac{1}{4!} \op{Tr}(F^4) \right) = \frac{1}{12} \frac{1}{2 \cdot 4!} \op{Tr}(R^2) \op{Tr}(F^4) .
\]
These factors are the factors which appear in Proposition \ref{prop:anomaly}. 


\printbibliography

\end{document}